\documentclass[a4paper,UKenglish,thm-restate,cleveref,numberwithinsect]{lipics-v2021}

\newboolean{arXivLongVersion}
\setboolean{arXivLongVersion}{true} %

\pdfoutput=1 %

\ifthenelse{\boolean{arXivLongVersion}}{%
\hideLIPIcs  %
}{%
\relatedversiondetails{Full Version}{https://arxiv.org/abs/XXX}
}

\EventEditors{Ana Sokolova and Patrick Totzke}
\EventNoEds{2}
\EventLongTitle{37th International Conference on Concurrency Theory (CONCUR 2026)}
\EventShortTitle{CONCUR 2026}
\EventAcronym{CONCUR}
\EventYear{2026}
\EventDate{September 1--4, 2026}
\EventLocation{Liverpool, UK}
\EventLogo{}
\SeriesVolume{391}
\ArticleNo{20}

\nolinenumbers

\newcommand{\rendu}{}

\usepackage{amsmath,amsfonts}
\usepackage{stmaryrd}
\usepackage[x11names]{xcolor}
\usepackage{my_msc}

\usepackage{subfiles}
\usepackage[textsize=footnotesize, textwidth=25mm]{todonotes}
\usepackage{logic9-thm}
\usepackage{quotes}
\usepackage{tabularx,booktabs}
\usepackage{subcaption}
\usepackage[hyperref,quotation,makeidx]{knowledge}
\usepackage{hyperref}

\hypersetup{
    colorlinks=true,
    linkcolor=blue,  
    urlcolor=cyan,
    citecolor=Green4
    }

\usepackage{marvosym}
\usepackage{makecell,multirow}
\usepackage[alpine]{ifsym}
\usepackage{changepage}
\usepackage{multicol}

\usepackage{tikz}
\usetikzlibrary{
  arrows.meta,bending,positioning,backgrounds,calc,
  automata,arrows,decorations.pathreplacing,calligraphy,
  shapes.symbols
}
\tikzset{>=stealth'} %

\tikzstyle{every picture} = [style=semithick]
\tikzstyle{every node}    = [font=\small]
\tikzstyle{every state}   = [thick, minimum size=5mm, inner sep=0pt]

\tikzstyle{initial}   = [initial   by arrow, initial   text=, initial   distance=4mm]
\tikzstyle{accepting} = [accepting by arrow, accepting text=, accepting distance=4mm]

\usepackage{cancel}

\newcommand{\Ch}{\mathit{Ch}}

\newcommand{\act}[1]{\xrightarrow{#1}}

\newcommand{\Msg}{\mathbb{M}}

\makeatletter
\newcommand{\dashbar}[2][\mathop]{#1{\mathpalette\df@over{{\dashfill}{#2}}}}
\newcommand{\df@over}[2]{\df@@over#1#2}
\newcommand\df@@over[3]{%
  \vbox{
    \offinterlineskip
    \ialign{##\cr
      #2{#1}\cr
      \noalign{\kern1pt}
      $\m@th#1#3$\cr
    }
  }%
}
\newcommand{\dashfill}[1]{%
  \kern-.5pt
  \xleaders\hbox{\kern.5pt\vrule height.6pt width \dash@width{#1}\kern.5pt}\hfill
  \kern-.5pt
}
\newcommand{\dash@width}[1]{%
  \ifx#1\displaystyle
    2pt
  \else
    \ifx#1\textstyle
      1.5pt
    \else
      \ifx#1\scriptstyle
        1.25pt
      \else
        \ifx#1\scriptscriptstyle
          1pt
        \fi
      \fi
    \fi
  \fi
}
\makeatother

\ifx\mscset\undefined
\else
  \mscset{
      /msc/draw head/.initial=none,       %
      /msc/draw head/.default=none,
      /msc/draw frame/.initial=none, 
      /msc/draw frame/.default=none,
      instance end/.initial=foot,
      /msc/left inline overlap=7.5pt,
      /msc/right inline overlap=7.5pt,
      stop width=0.2cm,
      message/.style={arrows=Circle->. Circle,shorten > = -2.5pt,shorten < = -2.5pt,thick},
      /msc/region width = 0pt,
      /msc/foot height = 0pt,
      /msc/foot distance = 2pt,
      /msc/left environment distance = 0pt,
      /msc/right environment distance = 0pt,
      /msc/head top distance = 0pt,
      /msc/title distance = 0pt,
      /msc/title top distance = 0pt
      }
  \tikzset{>=stealth}

\fi
\ifx\algorithmiccomment\undefined
\else
  \renewcommand{\algorithmiccomment}[1]{{\color{gray}// #1}}
\fi

\IfClassLoadedTF{beamer}{}
{%
\ifx\lemma\undefined
  \newtheorem{lemma}{Lemma}[section]
  \newtheorem{lemma*}{Lemma}[section]
\else
\fi

\ifx\corollary\undefined
  \newtheorem{corollary}[lemma]{Corollary}
\else
\fi

\ifx\theorem\undefined
  \newtheorem{theorem}[lemma]{Theorem}
  \newtheorem{theorem*}[lemma]{Theorem}
\else
\fi

\ifx\definition\undefined
  \theoremstyle{break}
  \theorembodyfont{ \normalfont}
  \newtheorem{definition}[lemma]{Definition}
  \newtheorem{definition*}[lemma]{Definition}
\else

\fi

\theoremstyle{remark}

\ifx\remark\undefined
  \newtheorem{remark}[lemma]{Remark}
  \newtheorem{remark*}[lemma]{Remark}
\else

  \fi

  \ifx\example\undefined
  \newtheorem{example}{Example}
  \newtheorem{example*}{Example}
\else

\fi

\ifx\notation\undefined
\newenvironment{notation}[1]{
  \vspace{5pt}
  \setlength{\parindent}{-10pt}%
  \setlength{\leftskip}{10pt}%
  \setlength{\rightskip}{0pt}%
    \textbf{#1} :}
{\par\vspace{5pt}}
\fi
\ifx\proof\undefined
  \newenvironment{proof}[1]{
  \noindent
  \textbf{proof: }
 \begin{adjustwidth}{0.25cm}{0.25cm}}
  {\hfill$\Box$
 \end{adjustwidth}}
\fi

\DeclareDocumentEnvironment{mydefinition}{O{} O{ }}{%
  \begin{definition}[{#1}]%
  \ifx#2 
  \else
    \label{def:#2}%
  \fi
}{\end{definition}}
\DeclareDocumentEnvironment{mylemma}{O{ }}
{\restatable{lemma}{lem:#1}%
\ifx#1 
\else
  \label{lem:#1}%
\fi}
{\endrestatable}
\DeclareDocumentEnvironment{mycorollary}{O{ }}
{\restatable{corollary}{cor:#1}%
  \ifx#1 
  \else
    \label{cor:#1}%
  \fi}
  {\endrestatable}
\DeclareDocumentEnvironment{mytheorem}{O{ }}
{\restatable{theorem}{thm:#1}%
\ifx#1 
\else
  \label{thm:#1}%
\fi}
{\endrestatable}
\DeclareDocumentEnvironment{myremark}{O{ }}
{\begin{remark}%
  \ifx#1 
  \else
    \label{rem:#1}%
  \fi}
  {\end{remark}}

\newenvironment{proof-sketch}{\begin{proof}{\hspace{-5pt}\textbf{\color{gray}[sketch]}}}{\end{proof}}
}
\makeatletter
\newcommand{\myrecall}[1]{\csname#1\endcsname*}
\makeatother
\definecolor{todocol}{RGB}{249,226,52}

\definecolor{commentColRomain}{RGB}{29, 182, 82}
\definecolor{commentColAnca}{RGB}{182, 29, 82}
\definecolor{commentColGregoire}{RGB}{148, 0 211}
\newcommand{\pcomment}[3]{{\small\color{#1}#3\{#2\}}}
\ifx\rendu\undefined
\newcommand{\anca}[1]{\pcomment{commentColAnca}{#1}{A}}

\newcommand{\romain}[1]{{\pcomment{commentColRomain}{#1}{R}}}
\newcommand{\gregoire}[1]{{\pcomment{commentColGregoire}{#1}{G}}}
\else

\newcommand{\anca}[1]{\undefined}

\newcommand{\romain}[1]{{\undefined}}
\newcommand{\gregoire}[1]{{undefined}}
\fi

\usepackage{quotes}
\knowledgeconfigure{diagnose line=true}
\knowledgedirective{notion}{autoref,style=standard,intro style=intro standard}
\knowledgeconfigure{quotation}
\knowledgestyle{standard}{color=Blue4}
\knowledgestyle{intro standard}{emphasize} 

\knowledge{notion}
 | Parametrized Communicating Automaton
 | PCA

\knowledge{notion}
 | $k$-phase-bounded

\knowledge{notion}
 | $(d,k)$-coverable
 | $(\infty ,1)$-coverable
 | $(1,1)$-coverable
 | $(1,\infty )$-coverable
 | $(d,k=\infty )$-coverable
 | $({d=\infty },{k})$-coverable

\knowledge{notion}
 | VASS

\knowledge{notion}
 | Nested Counter System
 | NCS
 | Nested Counter Systems

\knowledge{notion}
 | Nested Reset Counter System
 | NRCS
 | Nested Reset Counter Systems

\knowledge{notion}
 | transducer-defined Petri Net
 | TdPN

\knowledge{notion}
 | Interface PCA
 | IPCA
\knowledge{notion}
 | VASS with actions

\knowledge{notion}
 | interface language

\def\ie{{i.e.}}

\title{On Parameterized Verification Over Tree Topologies}

\author{Romain Delpy}{LaBRI, Univ. Bordeaux, CNRS, Bordeaux INP, Talence, France}{}{https://orcid.org/0009-0006-0716-3787}{}
\author{Anca Muscholl}{LaBRI, Univ. Bordeaux, CNRS, Bordeaux INP, Talence, France}{}{https://orcid.org/0000-0002-8214-204X}{}
\author{Grégoire Sutre}{LaBRI, Univ. Bordeaux, CNRS, Bordeaux INP, Talence, France}{}{https://orcid.org/0009-0004-3839-0005}{}

\authorrunning{R. Delpy, A. Muscholl, and G. Sutre}

\Copyright{Romain Delpy, Anca Muscholl, and Grégoire Sutre}

\ccsdesc[500]{Theory of computation~Logic and verification}

\keywords{%
  Concurrent programming,
  Parameterized verification
}

\funding{%
  This work was (partially) supported by the grant ANR-23-CE48-0005 of the French National
  Research Agency ANR (project PaVeDyS).
}

\newcommand{\projonproc}[2]{{#1}_{|#2}}

\newcommand{\PSnd}{\mathtt{{Dw}}}
\newcommand{\PRec}{\mathtt{{Up}}}

\newcommand{\Stt}{S}
\newcommand{\init}{\text{\textit{init}}}
\newcommand{\final}{\text{\textit{final}}}
\newcommand{\full}{\text{\textit{full}}}
\newcommand{\dead}{\text{\textit{dead}}}
\newcommand{\idle}{\text{\textit{idle}}}
\newcommand{\Tmove}{\Mm_\text{\textit{move}}}
\newcommand{\Tjoin}{\Mm_\text{\textit{join}}}
\newcommand{\Tfork}{\Mm_\text{\textit{fork}}}
\newcommand{\winit}{w_\text{\textit{init}}}
\newcommand{\wfinal}{w_\text{\textit{final}}}
\newcommand{\mrk}{\mathbf{m}}
\newcommand{\start}{\mathsf{start}}
\newcommand{\Conf}{\mathcal{C}}

\renewcommand{\root}{\mathsf{root}}

\newcommand{\cnf}{\mathbf{c}}
\renewcommand{\mrk}{\mathbf{m}}
\newcommand{\initLoc}{s^{init}}
\newcommand{\cnfi}[2]{\mathbf{c}^\init_{#1,#2}}
\newcommand{\sycom}[3]{#1 {\Join} #2 (#3)}
\newcommand{\sytau}[1]{#1 : \tau}
\newcommand{\Reachdk}[2]{\textsc{$(#1,#2)$-Coverability}}

\newcommand{\ivec}{\mathbf{x}}
\newcommand{\ivtr}{\mathbf{v}}
\newcommand{\vzero}{\mathbf{0}}
\newcommand{\trace}{\mathsf{trace}}
\renewcommand{\Msg}{M}
\newcommand{\Sigmat}{A_\tau}
\newcommand{\transPCA}{\Delta}
\newcommand{\VtransPCA}{\Delta_A}
\newcommand{\transVASS}{\Delta}
\newcommand{\VtransVASS}{\Delta_V}
\newcommand{\Natinf}{\Nat_\infty}
\newcommand{\ncnf}{C}
\newcommand{\FF}{\mathbf{F}}
\newcommand{\up}{{\uparrow}}
\newcommand{\dw}{{\downarrow}}
\newcommand{\Tr}{\Tt}
\newcommand{\upd}{\mathrm{upd}}
\newcommand{\rst}[1]{\mathrm{rst}(p)}

\begin{document}

\maketitle

\begin{abstract}
  Parameterized verification of finite-state processes with
  rendez-vous synchronization is notoriously undecidable when processes are
  linearly ordered. In this paper we study two kinds of bounds under which we
  determine the complexity  of safety checking over tree topologies. When
  bounding the depth we obtain that the complexity is related to the fast
  growing hierarchy. Our second  bound limits the alternations between upwards
  and downwards synchronizations in the tree (phases), and occurs naturally in
  many concrete settings.  If we fix the number of phases then the complexity of
  safety checking is \EXPSPACE\ complete, and if the number of phases is part of
 the input it is 2\EXPSPACE\ complete (both for arbitrary depth).
\end{abstract}

\section{Introduction}

Analyzing the correctness of concurrent or distributed systems is an inherently
challenging task, due to inter-process interactions that can result in very
large state spaces.
Beyond this state explosion problem, a different challenge arises in
settings where one is interested in proving
correctness independently of the system size. 
This is the realm of parameterized verification, that has gained intensive
momentum over the past decade (cf.~surveys~\cite{Veith15,Esparza14}). 

In this paper we investigate the complexity of safety checking of  parameterized
systems over tree topologies. Our focus on trees is motivated by two key ideas.
First, trees naturally model modern execution environments: in recursive process
creation, parent and child processes synchronize through shared communication
channels. This structure also arises in cloud infrastructures (master-worker
hierarchies), file systems, and hierarchical cache coherence protocols. Second,
while parameterized communication over general graphs is notoriously difficult
to define — often requiring complex machinery like MSO and bounded
tree-width~\cite{AminofKRSV18}, or specialized logics~\cite{BozgaEISW20,
BozgaI21} — tree topologies offer a balance between expressive power
and algorithmic tractability.

Consider a concurrent web scraper, modeled according to a recursive \texttt{Rust} implementation
(\href{https://github.com/dawksh/mscraper/}{\color{blue}\texttt{github.com/dawksh/mscraper/}})
which scans for URLs embedded in web pages.
In this system, when a thread discovers a new URL, it spawns a child thread to
scan the corresponding page; this process continues recursively until a
predefined depth is reached. Communication follows the tree structure: each
process aggregates URL lists from its children and propagates the results back
to its parent. Formally, we model each process as a finite-state automaton where
transitions represent either synchronizations with a parent (written as
$\up$), or with a child (written as $\dw$, see Figure~\ref{fig:example}). 
Data exchange involves passing URLs coupled with depth counters; once a process receives a depth value equal to the maximum threshold, it terminates its recursive spawning and stops synchronization with further child processes.

\begin{figure}
    \begin{center}
        \begin{subfigure}{0.3\textwidth}
            \begin{tikzpicture}[initial text=,
    every state/.style={inner sep=2pt,minimum size=10pt,ellipse}]

    \node[state,initial,circle] (init) at (0,0) {};
    \node[state] (end) at (2,0) {};
    \node        () at (0,-1) {\color{white}a};

    \path[->]   (init) edge node[above] {$\dw(\texttt{url},1)$} (end)
                (init) edge[loop above] node[above] {$\dw(\texttt{url},1)$} ()
                (end) edge[loop above] node[above] {$\dw$List<\texttt{url}>} ()
                ;

\end{tikzpicture}
        \end{subfigure}
        \hfill
        \begin{subfigure}{0.6\textwidth}
        \begin{tikzpicture}[initial text=,
    every state/.style={inner sep=2pt,minimum size=10pt,ellipse}]

    \node[state,initial,circle] (init) at (0,0) {};
    \node[state] (dup1) at (2.1,1) {};
    \node[state] (dup3) at (6.3,1) {};
    \node[state] (dup2) at (4.2,1) {};
    \node[state] (bot1) at (2.1,-1) {};
    \node[state] (bot2) at (4.2,-1) {};
    \node        ()     at (0,1.5) {for $i<d$};

    \path[->]   (init) edge node[above,sloped] {$\up(\texttt{url},i)$} (dup1)
                (dup1) edge[loop above] node[above] {$\dw(\texttt{url},i+1)$} ()
                (dup1) edge node[above] {$\dw(\texttt{url},i+1)$} (dup2)
                (dup2) edge[loop above] node[above] {$\dw$List<\texttt{url}>} ()
                (dup2) edge node[above] {$\up$List<\texttt{url}>} (dup3)
                (init) edge node[above,sloped] {$\up(\texttt{url},d)$} (bot1)
                (bot1) edge node[above] {$\up$List<\texttt{url}>} (bot2)
                ;

\end{tikzpicture}
        \end{subfigure}
    \end{center}
    \caption{Automata representing the processes of the web scrapper, where $d$ is the maximal depth.
    The left one represents the main process, and the right one represents the spawned threads.}
    \label{fig:example}
\end{figure}

In  models with peer-to-peer communication, the undecidability of
safety  checking is immediate under asynchronous communication~\cite{brand1983communicating}, regardless of parameterization.
We therefore focus on  rendez-vous
synchronization. Even in this synchronous setting, however, safety checking on
trees, or even arrays, remains  undecidable~\cite{AptK86,german1992reasoning}. 
This motivates the search for restrictions under which safety checking is decidable.

We investigate the complexity of two under-approximations of systems with rendez-vous
synchronization over trees. The first
one consists in bounding the tree depth. This is arguably the most
natural restriction that we can impose on our systems. However, we show that
in this case the complexity of safety checking is prohibitively high,
characterized by the fast growing hierarchy. Specifically, for a fixed depth
$d$, complexity lies between the levels $\FF_{\Omega_{d-1}}$ and
$\FF_{\Omega_{d}}$ of the hierarchy.
This result motivates our second restriction, called \emph{phase
bound}. Within a phase a process may synchronize either exclusively with its parent or exclusively with its children. Our web scraper
example above has phase bound 3. It is also interesting to note that many of the
standard examples of parametrized protocols over trees are phase bounded,
e.g.~the Leader election protocol~\cite{AbdullaHDHR08} (electing a leader among leaves) has also
phase bound 3.  We show that by fixing the number of
phases, safety checking becomes \EXPSPACE-complete, and if the number of phases
is part of
the input then it becomes
2\EXPSPACE-complete. Somewhat surprisingly, these complexity results hold
independently of the tree depth.

\emph{Related work.} 
The verification of parameterized tree-structured systems has been extensively
studied. Tree Regular Model Checking 
\cite{AbdullaJMd02, KestenMMPS01} provides a  symbolic framework
relying on tree transducers. It often results in semi-algorithmic
procedures and is typically restricted to ranked architectures. To cope with
non-termination and implementation complexity, alternative symbolic approaches
have been proposed, such as backward reachability~\cite{AbdullaHDHR08} and
abstraction-refinement techniques~\cite{BouajjaniHRV12}.

A parallel line of research focuses on cutoff-based verification, which reduces
the parameterized problem to a finite check on a representative system size.
While cutoffs have been established for various logics, including indexed LTL
and CTL \cite{AminofKRSV18}, and their existence proved to be polynomially
decidable for certain rendez-vous protocols \cite{BalasubramanianER23}, this
approach remains sensitive to the state-space explosion problem.

Our work also relates to recent efforts to identify decidable fragments through
structural and behavioral constraints. In the context of pi-calculus,
bounding the tree depth of process creation leads to a complexity of
$\FF_{\varepsilon_0}$~\cite{balasubramanian2022complexity}. A different phase-based abstraction has been explored for broadcast
communication \cite{GuillouSS24}. In that setting, a phase is defined by the
direction of communication (exclusively broadcasting or receiving); however,
safety remains undecidable even for a small number of phases.

\smallskip
For convenience, technical terms and notations in the
electronic version of this manuscript are hyper-linked to their 
definitions (cf.~\url{https://ctan.org/pkg/knowledge}).
\ifthenelse{\boolean{arXivLongVersion}}{%
Proofs that are missing in the main text can be found in the
appendix.
}{}

\section{Definitions}

In this paper, we consider communicating processes over tree topologies.
A rooted tree  $\Tr=(P,\Ch)$ is a directed graph without self-loops, where
$P$ is a set of processes and $\Ch \subseteq P\times P$ is a set of
channels, with one distinct node (the root) such that every
node has a unique path from the root.
We will denote the root of tree $\Tr$  by $\root_\Tr$ (or
just $\root$ if $\Tr$ is clear from the context).
The orientation of the edges specifies how processes interact in the tree: $(p,q)\in \Ch$ means process $p$ is the parent of process $q$.

We let %
$\Msg$ denote a finite set of message contents, and we  write:
\begin{itemize}
  \item $\PSnd = \{\dw m\mid m\in\Msg\}$ 
  for the set of synchronization actions with some child process (downwards synchronizations).
  \item $\PRec = \{\up m\mid  m\in \Msg\}$ for the set of synchronization
  actions with the parent process (upwards synchronizations).
\end{itemize}
By $\Act = \PSnd \cup \PRec \cup \{\tau\}$ we denote the set of all actions (with $\tau$ an internal action).

A communicating automaton %
is a finite set
of processes that exchange messages, each process being given as a finite LTS.
In the parametric setting,
the number of processes is not known beforehand. All processes
have the same state space, as in rendez-vous protocols~\cite{german1992reasoning}.
\begin{definition}
  \AP A ""Parametrized Communicating Automaton"" (\emph{PCA} for short) is
  a tuple
  $(\Stt,\transPCA,\initLoc)$.  Here $\Stt$ is the finite set of \emph{states},
  ${\transPCA} \subseteq \Stt\times \Act \times  \Stt$ the (finite) set of \emph{transitions}, and
  $\initLoc \in \Stt$ the~\emph{initial} state.
\end{definition}

We will use \emph{synchronous communication}  as synchronization mechanism. 
Given a "PCA" $\Aa=(\Stt,\transPCA,\initLoc)$ and a (tree) topology $\Tr=(P,\Ch)$,
we define a \emph{configuration} as a total function $\cnf:P\rightarrow \Stt$
mapping each process to its current state.
We denote by $\Conf_{\Aa,\Tr }$ the set of all configurations of $\Aa$ over $\Tr $. We call
$\cnfi{\Aa}{\Tr }$ the initial configuration of $\Aa$ over $\Tr $, with $\cnfi{\Aa}{\Tr }(p)=\initLoc$ for all $p\in P$ (we write $\cnf^\init$ when $\Aa$ and $\Tr$ are clear from the context).%

To define transitions we need to enrich actions by process names.
We denote by
$\Act_\Tr  = \{\sycom{p}{q}{m}\mid (p,q)\in \Ch,m\in\Msg\} \cup \{\sytau{p} \mid
p\in P\}$ the set of synchronous communication actions. As expected, 
$\sycom{p}{q}{m}$ is a synchronization between $p$ and $q$ with content $m$.

The transition relation $\rightarrow_{\Aa,\Tr }\mathop{\subseteq}\Conf_{\Aa,\Tr } \times \Act_\Tr  \times \Conf_{\Aa,\Tr }$ 
over the set $\Conf_{\Aa,\Tr }$ of configurations of $\Aa$ over $\Tr $ is defined as expected.
Given two configurations $\cnf,\cnf'\in\Conf_{\Aa,\Tr }$,
we let $\cnf\xrightarrow{a}_{\Aa,\Tr } \cnf'$ if either is true:
\begin{itemize}
  \item $a= \sycom{p}{q}{m}$ for some $(p,q)\in \Ch$ and $m\in \Msg$, $\cnf(p)\xrightarrow{\dw m}\cnf'(p)\in\transPCA$, $\cnf(q)\xrightarrow{\up m}\cnf'(q)\in\transPCA$,
  and $\cnf(t)=\cnf'(t)$ for all $t\in P\backslash\{p,q\}$.
  \item $a = \sytau{p}$ for some $p\in P$, $\cnf(p)\xrightarrow{\tau}\cnf'(p)\in\transPCA$ and $\cnf(q)=\cnf'(q)$ for all $q\neq p$.
\end{itemize}

A run of a "PCA" $\Aa$ over a topology $\Tr$ is a sequence $\rho=\cnf_0,a_1,\cnf_1,\dots,a_n,\cnf_n$ such that $\cnf_0=\cnfi{\Aa}{\Tr }$, and
$\cnf_i \xrightarrow{a_{i+1}}_{\Aa,\Tr }\cnf_{i+1}$ for all $0\leq i < n$. We call
$a_1\cdots a_n$ the
\emph{trace} of the run $\rho$, and denote it by $\trace(\rho)$.
A configuration $\cnf$ is \emph{reachable} if there is a run
ending in $\cnf$.

The projection of a trace $u\in{\Act_\Tr }^*$ over a process $p$, written as
$u|_p\in\Act^*$, is
the subsequence of  actions in which $p$ participates:
$\sycom{p}{q}{m}$ projects to $\dw m$, $\sycom{q}{p}{m}$ projects to $\up m$, and $\sytau{p}$ projects to $\tau$.

The size of a "PCA" $\Aa=(\Stt,\Act,\transPCA,\initLoc)$ is defined as $|S|+|\Delta|$.

A common task in parametrized verification is to check safety, \ie, that no error state is reachable, no matter how many processes participate in a run.
Formally,
the \textsc{Safety} problem asks,
given a "PCA" $\Aa$ and a state $s\in\Stt$,
whether
for all rooted trees $\Tr $, there is no
run ending in some configuration $\cnf$ such that  $\cnf(\root_\Tr)=s$.
For ease of reading, we will reason on the dual, \textsc{Coverability} problem:
\begin{center}
  \begin{tabularx}{\textwidth-2cm}{lX}
    \toprule
    \multicolumn{2}{c}{\textsc{Coverability}} \\
    \midrule
    \bfseries Input:
    & A "PCA" $\Aa = (\Stt,\transPCA,\initLoc)$ and a state $s\in\Stt$.
    \\
    \bfseries Output:
    & Yes if there exists some rooted tree $\Tr$ and a
    run of $\Aa$ over $\Tr$ ending in some configuration $\cnf$ such that  $\cnf(\root_\Tr)=s$.
    \\
    \bottomrule
  \end{tabularx}
\end{center}

\begin{remark}\label{rem:mutli_init_st}

  The reader may have noticed that all processes have the same underlying LTS,
  unlike our example in Figure~\ref{fig:example}. However, we can 
  simulate systems where the root has a different LTS, say
  $\text{LTS}_{root}$, by letting a process choose at the beginning of the run
  between $\text{LTS}_{root}$ and the LTS of the non-root processes.
  For this we create a new LTS as the disjoint union of  $\text{LTS}_{\root}$ (with
  initial state $s^\init_{\root}$) and the LTS of the non-root processes (with initial
  state $s_0$), adding a new initial state
  $s^\init$. State $s^\init$  has
  only two $\tau$-transitions, leading to $s^\init_{\root}$ and to $s_0$. We also remove all upward actions from $\Ss_{\root}$, and set the final state we want
  to reach $s_f$ in $\Ss_{\root}$. This way we can ensure that the root has
  chosen $\Ss_{\root}$ if $s_f$ is reached; if a non-root process chooses to go to
  $\Ss_{\root}$ then
  it cannot receive messages from its parent, so its subtree is disconnected.
  \lipicsEnd
\end{remark}

The Coverability problem defined above is undecidable, as to be expected
because processes on a branch of the tree 
are linearly ordered~\cite{AptK86}. A simple argument for undecidability in our setting is to reduce the halting problem for Minsky machines with two counters, 
using a branch to encode each counter in unary, together
with a special symbol at the end of the branch to do zero tests (see
also~\cite{emerson2003reasoning} for a similar construction). 

In order to regain
decidability we will study the behavior of "PCA" under two restrictions. The
first one, arguably the most natural, is to bound the depth of trees. The second one
is to bound
the number of alternations each process can do between synchronizing downwards
and upwards in the tree.

\AP A run $\rho$ is said to be ""$k$-phase-bounded"" if for every process $p \in P$,
$\trace(\rho)|_p\in ((\PSnd + \tau )^*  + (\PRec + \tau) ^*)^{\leq k}$.
In the example given in Figure~\ref{fig:example}, the tree depth  is bounded by
$d$, and the phase number for each
process is bounded by $3$: first receive an URL from the parent, then send URLs
and receive URL lists from the children, and finally send
an URL list back to the parent.
\AP A state $s\in\Stt$ is ""$(d,k)$-coverable"" in $\Aa$ if there exists a
tree $\Tr $ of depth at most $d$ and a "$k$-phase-bounded" run ending in some configuration $\cnf$ such that  $\cnf(\root_\Tr)=s$.
By writing $d=\infty$ we mean that there is no bound on the depth of
the topology. Likewise, for $k=\infty$ we mean that there is no bound on the number of
phases.

We will consider four variants of the problem, depending on whether $d$ and $k$ are fixed or are
part of the input.
If $d$ and/or $k$ are inputs of the problem,  they are in $\Nat$ and given in unary, otherwise they can be in
$\Nat\cup\{\infty\}$, with $\infty$ meaning unbounded.

\begin{center}
  \begin{minipage}{0.45\textwidth}    
    \begin{tabularx}{\textwidth-0.1cm}{l@{\,}X}
      \toprule
      \multicolumn{2}{c}{\textsc{$(d,k)$-Coverability}} \\
      \midrule
      \bfseries Input:
      & A "PCA" $\Aa$ and a state $s\in\Stt$.
      \\
      \bfseries Output:
      & Yes if $s$ is "$(d,k)$-coverable" in $\Aa$.
      \\
      \bottomrule
    \end{tabularx}
    \vspace{11.5pt}
  \end{minipage}
  \begin{minipage}{0.45\textwidth}
      \begin{tabularx}{\textwidth-0.1cm}{l@{\,}X}
        \toprule
        \multicolumn{2}{c}{\textsc{Coverability($d,k$)}} \\
        \midrule
        \bfseries Input:
        & A "PCA" $\Aa$, a state $s\in\Stt$ and two integers $d,k\in\Nat$.
        \\
        \bfseries Output:
        & Yes if $s$ is "$(d,k)$-coverable" in $\Aa$.
        \\
        \bottomrule
      \end{tabularx}
  \end{minipage}
\end{center}

\begin{center}
  \begin{minipage}{0.45\textwidth}
    \begin{tabularx}{\textwidth-0.1cm}{l@{\,}X}
      \toprule
      \multicolumn{2}{c}{\textsc{$(d)$-Coverability($k$)}} \\
      \midrule
      \bfseries Input:
      & A "PCA" $\Aa$, a state $s\in\Stt$ and an integer $k\in\Nat$.
      \\
      \bfseries Output:
      & Yes if $s$ is "$(d,k)$-coverable" in $\Aa$.
      \\
      \bottomrule
    \end{tabularx}
  \end{minipage}
    \begin{minipage}{0.45\textwidth}
    \begin{tabularx}{\textwidth-0.1cm}{l@{\,}X}
      \toprule
      \multicolumn{2}{c}{\textsc{$(k)$-Coverability($d$)}} \\
      \midrule
      \bfseries Input:
      & A "PCA" $\Aa$, a state $s\in\Stt$ and an integer $d\in\Nat$.
      \\
      \bfseries Output:
      & Yes if $s$ is "$(d,k)$-coverable" in $\Aa$.
      \\
      \bottomrule
    \end{tabularx}
  \end{minipage}
\end{center}

\begin{center}

\end{center}

The next two tables summarize our results:

\begin{figure}[htb]
  \begin{center}
    \renewcommand{\arraystretch}{1.3}
      \begin{tabular}{|c|c|c|c|c|}
        \hline        \diaghead(5,-4){cdd}%
      {$d$}{$k$}     & 1 & $\geq$2 & $\infty$\\
        \cline{1-4}
        1        &  \multicolumn{3}{l|}{{\EXPSPACE-c}}\\  
        \cline{1-1}\cline{3-4}
        \multirow[c]{2}*{$\geq 2$} &\multirow[c]{3}*{\makecell{(Lem~\ref{lem:exp-comp})\\\\{\color{white}\EXPSPACE-c}}} &  \multirow{3}*{\makecell{\EXPSPACE-c\\(Lem~\ref{lem:exp-comp}, Thm~\ref{thm:phase-bound-cov})}}& \multirow{2}*{\makecell{between $\FF_{\Omega_{d-1}}$\\and $\FF_{\Omega_{d}}$(Lem~\ref{lem:ncs-upper}, Lem~\ref{lem:ncs-lower-bound})}}\\
        &  & &  \\
        \cline{1-1}\cline{4-4}
        $\infty$   &  &  & Undec\\
        \cline{1-4}
      \end{tabular}
    \renewcommand{\arraystretch}{1}

    \caption{Complexity of \textsc{$(d,k)$-Coverability} for "PCA".}
  \label{fig:table_res}
  \end{center} 
\end{figure}
\begin{figure}[htb]
  \centering
  \begin{subfigure}{0.33\textwidth}
    \centering
    \begin{tabular}{|c|c|}
      \hline
        $d$ & \textsc{$(d)$-Coverability($k$)}\\
        \hline
        $1$ & \EXPSPACE-c (Lem~\ref{lem:exp-comp})\\
        \hline
        \makecell{$\geq 2$\\$\infty$} & \makecell{2\EXPSPACE-c\\(Thm~\ref{thm:2exp-hard}, Thm~\ref{thm:phase-bound-cov})}\\
        \hline
    \end{tabular}
  \end{subfigure}
  \begin{subfigure}{0.4\textwidth}
    \centering
    \begin{tabular}{|c|c|}
      \hline
        $k$ & \textsc{$(k)$-Coverability($d$)}\\
        \hline
        $\geq 1$ & \makecell{\EXPSPACE-c\\(Lem~\ref{lem:exp-comp}, Thm~\ref{thm:phase-bound-cov})}\\
        \hline
        $\infty$ & $\FF_{\varepsilon_0}$-c (Lem~\ref{lem:ncs-upper}, Lem~\ref{lem:ncs-lower-bound})\\
        \hline
    \end{tabular}
  \end{subfigure}
  \begin{subfigure}{0.25\textwidth}
    \centering
      \begin{tabular}{|c|}
        \hline
        \textsc{Coverability($d,k$)}\\ \hline
        \makecell{2\EXPSPACE-c\\(Thm~\ref{thm:2exp-hard}, Thm~\ref{thm:phase-bound-cov})}\\
        \hline
      \end{tabular}
  \end{subfigure}
  \caption{Complexity of \textsc{Coverability} for "PCA" when $d$ and/or $k$ are part of the input.}
\end{figure}

Let us comment on some boundary cases for our parameters.
If we restrict to $1$-phase-bounded runs, the root process can talk to its children;
but those processes  will not be able anymore to communicate with their children. Thus a state is
"$(\infty,1)$-coverable" iff it is "$(1,1)$-coverable". 
Similarly, if the depth is one then
every run is $1$-phase-bounded,
so a state is "$(1,\infty)$-coverable" iff it is "$(1,1)$-coverable".

\section{Depth-bounded "PCA"}
In this section, we will show that coverability of "PCA" on bounded-depth
trees is closely related  to 
coverability in "Nested Counter Systems"~\cite{decker2016freeze,balasubramanian2022complexity}.
We start by showing that $(1,k=\infty)$-\textsc{coverability} is equivalent to "VASS" \textsc{coverability}.

\subsection{Depth one  "PCA"}\label{sec:one}

Let $m$ be a positive integer.
\AP An $m$-""VASS"" is a tuple $\Vv=(Q,q^\init,\transVASS)$ where $Q$ is a finite set of states, $q^\init$ is the initial state, and
$\transVASS\subseteq Q\times\Int^m\times Q$ a transition relation.

 A \emph{configuration} of $\Vv$ is
a tuple $(q,\ivec)$ where $q\in Q$ is a state and $\ivec\in\Nat^m$ is a vector
of non-negative integers. The initial configuration is $(q^\init,\vzero)$.
The step relation between two configurations is defined by $(q,\ivec)\xrightarrow{\ivtr}(q',\ivec')$ if
$(q,\ivtr,q')\in\transVASS$ and $\ivec' = \ivec + \ivtr \geq \vzero$.
It is understood that $\leq$ denotes the component-wise extension of the usual order $\leq$ on $\Nat$.
A run of $\Vv$ is a sequence $\sigma = (q_0,\ivec_0)\xrightarrow{\ivtr_1} (q_1,\ivec_1)\xrightarrow{\ivtr_2} \dots \xrightarrow{\ivtr_n}(q_n,\ivec_n)$,
where $(q_0,\ivec_0)$ is initial.

Let $(q,\ivec)$ and $(q',\ivec')$ be two configurations. We say that $(q',\ivec')$ covers $(q,\ivec)$ if $q=q'$ and $\ivec \leq \ivec'$,
and we denote this as $(q,\ivec) \preceq (q',\ivec')$.

The size of an $m$-"VASS" $\Vv$ is
$|\Vv| = |Q| + \sum_{(q,\ivtr,q')\in\transVASS} \|\ivtr\|_1$ where
$\|\ivtr\|_1 = \sum_{i=1}^m |\ivtr[i]|$.

We recall the coverability problem for "VASS":

\begin{center}
  \begin{tabularx}{\textwidth-2cm}{lX}
    \toprule
    \multicolumn{2}{c}{\textsc{VASS-Coverability}} \\
    \midrule
    \bfseries Input:
    & An $m$-"VASS" $\Vv$  and a state $q\in Q$.
    \\
    \bfseries Output:
    & Yes if  $(q,\vec{0})$ is coverable in $\Vv$.
    \\
    \bottomrule
  \end{tabularx}
\end{center}

This problem is known to be \EXPSPACE-complete~\cite{lipton76,rackoff1978covering}.

\medskip

To show that \Reachdk{1}{k=\infty} for "PCA" is in \EXPSPACE, we can use counter abstraction~\cite{german1992reasoning}.
We simulate a "PCA" $\Aa=(\Stt,\Act,\Delta,s^\init)$ on trees of depth 1 by
counting how many children of the root are in a given state.
Thus we construct in polynomial time an $m$-"VASS" $\Vv$ with $m=|\Stt|$, in which coverability for a
state $s\in\Stt$ can be checked in exponential space~\cite{rackoff1978covering}.

For the lower bound, we reduce \textsc{VASS-Coverability} to \Reachdk{1}{k=\infty}.
Let $\Vv$ be an $m$-"VASS". We construct a "PCA" $\Aa$ where process $\root$ keeps track of the state and simulates the transitions of $\Vv$, while the children 
will store the counter values.
Any child process will be either in state  $s^\init$ or in $s_i$ ($1 \leq i \leq m$), and the number of children in state $s_i$ will represent the value of the $i$-th counter.

To do so, we replace each transition $q\xrightarrow{\ivtr}q'$ of $\Vv$ by a series of transitions
that will move $\ivtr[i]$ children from state $s^\init$ to state $s_i$ if $\ivtr[i]\geq 0$,
and will move $-\ivtr[i]$ children from state $s_i$ to state $s^\init$ if $\ivtr[i]<0$, for
each $1 \leq i \leq m$.

From this we get that configuration $(q,\vec{0})$ is coverable in $\Vv$ iff there exists a tree
$\Tr$ of depth 1 and some configuration $\cnf\in\Cc_{\Aa,\Tr}$ reachable
in $\Aa$ over $\Tr$ where $\cnf(\root) = q$.

\begin{lemma}\label{lem:exp-comp}
    \Reachdk{1}{k=\infty} for "PCA" is \EXPSPACE-complete.
\end{lemma}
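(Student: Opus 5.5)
We prove the two directions separately, following the sketch given just before the statement. For membership in \EXPSPACE{} we use counter abstraction. Given a "PCA" $\Aa=(\Stt,\transPCA,\initLoc)$ and a target $s$, we build an $|\Stt|$-"VASS" $\Vv$. Its control states are $\Stt$ (the current state of $\root$) plus a fresh initial state $q_0$. Counter $x_t$ records how many children of the root are currently in state $t$.

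From $q_0$ we allow self-loops that increment $x_{\initLoc}$; these nondeterministically create the children, which is harmless since children never created stay idle. A $\tau$-transition $(q_0,\vzero,\initLoc)$ then starts the root. The remaining transitions of $\Vv$ are as follows:
\begin{itemize}
  \item a root transition $t\xrightarrow{\tau}t'$ yields $(t,\vzero,t')$;
  \item a child transition $u\xrightarrow{\tau}u'$ yields, for every root state $t$, the transition $(t,\mathbf{e}_{u'}-\mathbf{e}_u,t)$;
  \item a pair of root transition $t\xrightarrow{\dw m}t'$ and child transition $u\xrightarrow{\up m}u'$ yields $(t,\mathbf{e}_{u'}-\mathbf{e}_u,t')$.
\end{itemize}
Upward actions of the root and downward actions of children are never enabled on a depth-one tree, so they produce no transitions. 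We then show by a straightforward induction on run length that reachable configurations of $\Aa$ over depth-$\le 1$ trees correspond exactly to reachable configurations of $\Vv$ of the form $(t,\ivec)$. In the correspondence, $t$ is the root's state and $\ivec$ is the multiset of the children's states; since processes are anonymous, the multiset suffices. Hence $s$ is "$(1,\infty)$-coverable" iff $(s,\vzero)$ is coverable in $\Vv$. The size of $\Vv$ is polynomial, namely $O(|\Stt|\cdot|\transPCA|+|\transPCA|^2)$, so Rackoff's bound~\cite{rackoff1978covering} gives an \EXPSPACE{} algorithm.

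For hardness we reduce \textsc{VASS-Coverability}, which is \EXPSPACE-hard~\cite{lipton76}. Given an $m$-"VASS" $\Vv$ and a state $q$, the states of $\Aa$ consist of the states of $\Vv$, intermediate root states, and child states $s^\init,s_1,\dots,s_m$; the initial state $s^\init$ is shared with the root via the construction of Remark~\ref{rem:mutli_init_st}. A transition $q_1\xrightarrow{\ivtr}q_2$ is expanded into a chain of root transitions. The chain first performs all decrements: $-\ivtr[i]$ actions $\dw \mathrm{dec}_i$, each matched by a child transition $s_i\xrightarrow{\up \mathrm{dec}_i}s^\init$. It then performs the increments: $\ivtr[i]$ actions $\dw \mathrm{inc}_i$, each matched by $s^\init\xrightarrow{\up \mathrm{inc}_i}s_i$. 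The chain has length $\|\ivtr\|_1$, so $\Aa$ is polynomial in $|\Vv|$ given the unary-style size measure of $\Vv$.

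The main point to check is faithfulness in both directions, including nonnegativity. For the forward direction, a run of $\Vv$ covering $(q,\vzero)$ is simulated on a tree with enough children, taking as many as the total number of increments. For the backward direction, the root may complete a chain only if enough children are in $s_i$ for each decrement, which mirrors $\ivec+\ivtr\geq\vzero$ because decrements precede increments. Children are never spuriously created, since a child leaves $s^\init$ only through an $\mathrm{inc}$ synchronization. Interleavings within a chain are harmless, because only the root initiates synchronizations and the root is in a dedicated chain state. Hence the number of children in $s_i$ at chain boundaries equals the $i$-th counter, and $q$ is coverable in $\Vv$ iff $q$ is "$(1,\infty)$-coverable" in $\Aa$.
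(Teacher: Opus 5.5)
Your route is the paper's: counter abstraction into a VASS with one counter per state for membership, and for hardness a root that tracks the control state while children in $s_i$ act as tokens of counter $i$. The hardness half is fine as written.

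The membership half has a concrete soundness bug in how you encode synchronizations. For a root transition $t\xrightarrow{\dw m}t'$ and a child transition $u\xrightarrow{\up m}u'$ you add $(t,\mathbf{e}_{u'}-\mathbf{e}_u,t')$. The only thing forcing a child to be in state $u$ is the requirement $x_u-1\geq 0$. When the child transition is a self-loop, $u=u'$, the vector is $\vzero$ and the step is enabled even with $x_u=0$.

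Here is a concrete failure. Take the PCA with states $\initLoc,u,s$ and transitions $\initLoc\xrightarrow{\dw m}s$ and $u\xrightarrow{\up m}u$. On trees of depth at most one, no child ever leaves $\initLoc$: its only transition is a $\dw$, which a leaf cannot perform. So the root can never fire $\dw m$, and $s$ is not $(1,\infty)$-coverable. Yet your VASS reaches $(s,\vzero)$ via $(q_0,\vzero)\to(\initLoc,\vzero)\to(s,\vzero)$. The claimed exact correspondence between reachable configurations therefore fails, and your induction breaks precisely at this case.

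The repair is routine. A VASS transition carries a single displacement vector, so the presence of a child in $u$ can only be tested by decrementing. Route each synchronization through a fresh intermediate control state with no other outgoing transition: fire $-\mathbf{e}_u$ first, then $+\mathbf{e}_{u'}$. This keeps the VASS polynomial, and your induction and the appeal to Rackoff then go through. Child $\tau$-self-loops, by contrast, are harmless, since they are no-ops in the PCA as well.

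The paper's sketch only says it counts children per state and leaves the transition encoding implicit. So this is a detail you must get right when filling in the construction, not a departure from the paper's argument.
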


\subsection{Coverability for "Nested Counter Systems"}
\label{subsec:ncs}

We have seen that for depth one, "PCA" are equivalent to "VASS".
It seems natural to look at nested counters for "PCA" over trees of fixed depth.
We will show in the next subsection a tight correspondence between "PCA" over trees of depth at most $d$ and
$d$-"Nested Counter Systems"~\cite{decker2016freeze,balasubramanian2022complexity}.
The latter manipulate higher-order counters, in a similar manner to Nested Petri Nets~\cite{lomazova1999some}.
We first recall the definition of "Nested Counter Systems".

Let $d\in \Nat$ be an integer. 
\AP A $d$-""Nested Counter System"" ($d$-\emph{NCS}) is a tuple $\Nn = (Q,\Delta)$ where $Q$ is a finite set of \emph{states}
and $\Delta \subseteq \bigcup_{1 \leq i,j \leq d+1} (Q^i\times Q^j)$ is a finite set of \emph{rules}.
The set of \emph{configurations} $\Cc_\Nn$ of $\Nn$ is is defined as the set of all
labelled rooted trees of depth at most $d$, with labels
from the set $Q$.
Formally,
a configuration $\ncnf\in\Cc_\Nn$ is a tuple
$\ncnf=(V,E,\root,h)$ where $(V,E,\root)$ is a rooted tree of depth at most $d$,
and $h : V \rightarrow Q$ is a labeling function.

The transition relation $\mathop{\act{}} \subseteq \Cc_\Nn \times \Cc_\Nn$ on
configurations is defined as follows.
Let $r=((q_0,\dots,q_i),(q_0',\dots,q_j'))\in\Delta$ be a rule, with
$0 \leq i \leq j \leq d$.
We say that a configuration $\ncnf$ moves to some
configuration $\ncnf'$ using rule $r$
(written $\ncnf\xrightarrow{r}\ncnf'$) if there is a path $v_0,\dots,v_i$ in $\ncnf$ starting at the root labeled by $q_0,\dots,q_i$, and $\ncnf'$ is obtained from $\ncnf$ by
changing the label of each $v_k$ to $q_k'$ for $0 \leq k \leq i$, and for $i< k \leq j$ creating a new vertex $v_k$ as the child of $v_{k-1}$, labeled by $q_k'$.
Similarly, suppose $r=((q_0,\dots,q_i),(q_0',\dots,q_j'))\in\Delta$ is a rule, with $0 \leq j < i \leq d$.
Then $\ncnf\xrightarrow{r}\ncnf'$ if there is
a path $v_0,\dots,v_i$ in $\ncnf$ starting at the root and labeled by
$q_0,\dots,q_i$, and $\ncnf'$ is obtained from $\ncnf$ by changing the label of
each $v_k$ to $q_k'$ for $0 \leq k \leq j$
and removing the subtree rooted at $v_{j+1}$.

We write  $\ncnf\rightarrow\ncnf'$ if there exists a rule
$r\in\Delta$ such that
$\ncnf\xrightarrow{r}\ncnf'$. Moreover, we say that $\ncnf'$ is \emph{reachable}
from $\ncnf$ (written $\ncnf\act{*}\ncnf'$)  if
there exist $\ncnf_0,\dots,\ncnf_n$, $n \ge 0$, such that $\ncnf=\ncnf_0 \act{} \cdots\act{}
\ncnf_n=\ncnf'$. 
We say that $\ncnf'$ \emph{covers} $\ncnf$ (written $\ncnf\preceq\ncnf'$) if $\ncnf$
is an induced subtree of $\ncnf'$ with the same root.
The coverability problem for "NCS" is:

\begin{center}
  \begin{tabularx}{\textwidth-2cm}{lX}
    \toprule
    \multicolumn{2}{c}{\textsc{$d$-"NCS" Coverability}} \\
    \midrule
    \bfseries Input:
    & A $d$-"NCS" $\Nn$ and two configurations $\ncnf_i,\ncnf_f\in\Cc_\Nn$.
    \\
    \bfseries Output:
    & Yes if there exists $\ncnf$ such that $\ncnf_i\act{*}\ncnf$ and $\ncnf_f\preceq\ncnf$.
    \\
    \bottomrule
  \end{tabularx}
\end{center}

It is readily seen that $1$-"NCS" are equivalent to "VASS",
so \textsc{$1$-"NCS" Coverability} is \EXPSPACE-complete.
To state the complexity of \textsc{$d$-"NCS" Coverability} with $d \geq 2$, we need to introduce the \emph{fast-growing hierarchy} of complexity classes~\cite{schmitz2016complexity}.

We represent ordinal numbers using the \emph{Cantor Normal Form} (CNF) as $\alpha =
\omega^{\alpha_1} + \dots + \omega^{\alpha_n}$, where $\alpha_1 \geq \cdots \geq
\alpha_n$ are ordinals
also written in CNF.
We denote \emph{limit ordinals} by $\lambda$. These are ordinals such that $\alpha + 1 < \lambda$ for every $\alpha < \lambda$.

A \emph{fundamental sequence} for a limit ordinal $\lambda$ is a sequence of ordinals $(\lambda_n)_{n\in\Nat}$ with supremum $\lambda$.
We consider the same fundamental sequence as in~\cite{decker2016freeze,bala2026complexityncs},
which is defined by
\[
  (\alpha + \omega^{\beta+1})_n = \alpha + \underbrace{\omega^\beta + \dots + \omega^\beta}_n \qquad\qquad \text{and}\qquad\qquad (\alpha + \omega^{\lambda'})_n = \alpha + \omega^{\lambda'_n}
\]
for ordinals $\beta$ and limit ordinals $\lambda'$. 
For example, a fundamental sequence for $\omega^2$ is given by $(\omega \cdot n)_{n\in \Nat}$, a fundamental sequence for $\omega^\omega$ by
$(\omega^n)_{n\in\Nat}$, etc.

The \emph{fast-growing hierarchy} is the ordinal-indexed family of functions $F_\alpha : \Nat \rightarrow \Nat$
for $\alpha < \varepsilon_0$ inductively defined by
\[
  \hfil F_0(n) = n+1 \quad\quad F_{\alpha+1}(n) = \underbrace{F_\alpha( \cdots (F_\alpha}_{n \text{ times}}(n)) \cdots ) \quad \text{and} \quad F_{\lambda}(n) = F_{\lambda_n}(n)
  \ .
\]
The corresponding hierarchy $(\FF_\alpha)_{\alpha < \varepsilon_0}$ of \emph{fast-growing complexity classes} is defined in terms of the fast-growing functions $F_\alpha$,
see~\cite{schmitz2016complexity} for details.
We recall that $\FF_\omega$ corresponds to
the class of problems solvable in Ackermannian time, and
$\FF_{\omega^\omega}$ to the class of problems solvable in hyper-Ackermannian time.

To state what is known about the complexity of coverability for "NCS",
we let $\Omega_n$ denote the
tower of $\omega$'s of height $n-1$,
i.e.,
$\Omega_1 = \omega$ and $\Omega_n = \omega^{\Omega_{n-1}}$.
The ordinal $\varepsilon_0$ is the limit ordinal with fundamental sequence
$(\Omega_n)_{n\in\Nat}$.

\begin{theorem}[\cite{decker2016freeze,bala2026complexityncs}]
  The complexity of \textsc{$d$-"NCS" Coverability} is between
  $\FF_{\Omega_{d-1}}$ and $\FF_{\Omega_{d}}$ when $d \geq 2$ is fixed, and
  it is $\FF_{\varepsilon_0}$-complete when $d$ is part of the
  input.
\end{theorem}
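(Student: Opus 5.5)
This theorem is imported from \cite{decker2016freeze,bala2026complexityncs}, so within the paper the proof is a reference to those works. Here is the plan I would follow to reprove it. I would treat the upper and lower bounds separately, in both cases going through the standard machinery of \cite{schmitz2016complexity}.

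\emph{Upper bound.} First I would show that configurations of a $d$-"NCS", i.e.\ $Q$-labelled trees of depth at most $d$, are well-quasi-ordered by the embedding $\preceq$. The induction is on depth: a tree of depth $i$ is a label together with a finite multiset of trees of depth $i-1$. By Higman/Dickson-style arguments, the multiset embedding over a wqo is again a wqo. Its maximal order type should be bounded by an ordinal of the form $\omega^{\omega^{\cdots}}$ of height about $d$; a finite multiset over an ordinal $\alpha$ has type roughly $\omega^{\alpha}$. So the type of the depth-$d$ configurations is below $\Omega_{d+1}$, and more precisely, after accounting for the finite label set, of order $\Omega_d$ times a constant.

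Next, the transition relation is monotone with respect to $\preceq$: a rule applies along a root path and only creates or deletes subtrees. So the backward-coverability algorithm for well-structured transition systems applies and terminates. Equivalently, a shortest covering run yields a bad sequence, read in reverse through minimal predecessor bases. The sizes in this sequence are controlled by a primitive-recursive (indeed linear) control function, since each rule changes the tree by a bounded amount. I would then invoke the length function theorem for controlled bad sequences over multiset orderings. It bounds their length by a Cichoń/Hardy function at the order type, and this places the problem in $\FF_{\Omega_d}$. When $d$ is part of the input, taking the supremum over $d$ gives membership in $\FF_{\varepsilon_0}$.

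\emph{Lower bound.} I would follow the Hardy-computation technique. Ordinals below $\Omega_{d-1}$ in CNF are represented as nested multisets of depth about $d-1$: a node stands for $\omega^{\beta}$, with $\beta$ encoded by its children, and one extra level of depth holds the counter $n$. I would then build a $d$-"NCS" that weakly computes the Hardy function $H^{\alpha}(n)$ forward, via $(\alpha, n)\mapsto(\alpha_n, n)$ at limits and $(\alpha+1,n)\mapsto(\alpha,n+1)$. It must also weakly compute the inverse $H^{-\alpha}$ backward, in the style of Schnoebelen's reset/lossy constructions. The main obstacle is implementing the fundamental-sequence step $\omega^{\beta+1}\mapsto \omega^{\beta}\cdot n$, which requires copying $n$ subtrees. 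In an "NCS" rules can only create fresh paths and delete subtrees, so copying can only be done weakly: the system may lose structure but never gain any. I would first try the standard argument that a computation ends in a covering configuration only if it was exact. Composing this with a budget of $H^{\alpha}(n)$ given to a simulated space-bounded Turing machine (or a Minsky machine) gives $\FF_{\Omega_{d-1}}$-hardness. The hardness proof of this step is the hard part and is where the gap between $d-1$ and $d$ comes from. Letting $d$ grow with the input yields $\FF_{\varepsilon_0}$-hardness, matching the upper bound.
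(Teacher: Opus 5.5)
The lower-bound half of your plan has a genuine gap, and it sits exactly at the step this paper flags. Your construction is essentially the direct one of \cite{decker2016freeze}. You encode ordinals below $\Omega_{d-1}$ as nested multisets inside a $d$-NCS and implement $\omega^{\beta+1}\mapsto\omega^{\beta}\cdot n$ by weak (lossy) copying. You then rely on the usual robustness argument that a run covering the target must have been exact. As the paper recalls, \cite{bala2026complexityncs} pointed out that this direct lower-bound construction for NCS is flawed. Your proposal leaves precisely that step open: you say you would first try the standard argument and call it the hard part. So neither the $\FF_{\Omega_{d-1}}$-hardness for fixed $d$ nor the $\FF_{\varepsilon_0}$-hardness obtained by letting $d$ grow is established. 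Your account of where the gap between $\Omega_{d-1}$ and $\Omega_d$ comes from is also not what happens in the paper: it is not intrinsic to the Hardy step.

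The missing idea is to go through nested reset counter systems. In \cite{bala2026complexityncs} the Hardy-computation argument is carried out for NRCS. There a reset rule $(q_0,\dots,q_i)\xrightarrow{\rst{p}}(q'_0,\dots,q'_i)$ removes, in one step, every subtree below $v_i$ whose root is labelled $p$. This shows that $d$-NRCS Coverability is $\FF_{\Omega_d}$-complete.

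Lemma~\ref{lem:nrcs-to-ncs} then reduces $d$-NRCS Coverability to $(d+1)$-NCS Coverability. A reset at depth $i\geq 1$ is simulated by a DFS gadget that transfers the subtree at $v_i$ node by node to a fresh sibling, tracking the current source and destination with labels $\mathtt{src}(\cdot)$ and $\mathtt{dst}(\cdot)$. It skips the children of $v_i$ labelled $p$ and deletes source nodes when moving up. The gadget is lossy in a controlled way: every outcome is $\preceq$-below the result of the reset, and some run is exact. This is harmless for coverability. No sibling of the root can be created, so resets at the root need an extra dummy root level. That extra level is exactly why $d$-NCS only inherits $\FF_{\Omega_{d-1}}$-hardness from $(d-1)$-NRCS.

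The upper bound is then immediate, since a $d$-NCS is a $d$-NRCS without reset rules. If you keep your direct wqo argument instead, fix the ordinal bookkeeping. The maximal order type is not of order $\Omega_d$ times a constant: already for $d=1$ the configurations form $Q\times\mathbb{N}^{|Q|}$, of type $\omega^{|Q|}\cdot|Q|$. It is a tower of $d$ $\omega$'s topped by a term depending on $|Q|$, and $\Omega_d$ appears as the supremum over $|Q|$ of its exponent. Also note that \cite{decker2016freeze} only reached $\FF_{\Omega_{2d}}$, so the precision of the length function theorem you invoke is essential. The paper takes $\FF_{\Omega_d}$ from \cite{bala2026complexityncs}.
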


  The complexity of \textsc{$d$-"NCS" Coverability} for a fixed $d \geq 2$
  was shown in~\cite{decker2016freeze} to be between $\FF_{\Omega_{d-1}}$ and $\FF_{\Omega_{2d}}$.
  However,
  as pointed out in~\cite{bala2026complexityncs},
  the proof of the lower bound is flawed.
  The flawed construction of~\cite{decker2016freeze} is fixed in~\cite{bala2026complexityncs},
  albeit
  for the extended model of "NRCS" that adds reset operations to "NCS".
  More precisely,
  it is shown in~\cite{bala2026complexityncs} that \textsc{$d$-"NRCS" Coverability} is $\FF_{\Omega_{d}}$-complete
  (so the upper bound is reduced to $\FF_{\Omega_{d}}$,
  despite the additional reset operations).

  The lower bound of~\cite{bala2026complexityncs} does not (at least directly) apply to \textsc{$d$-"NCS" Coverability}.
  We show in the next subsection that
  \textsc{$d$-"NRCS" Coverability} can be reduced to \textsc{$(d+1)$-"NCS" Coverability}.
  This provides a proof of the claimed lower bound of~\cite{decker2016freeze}.

\subsection{From Nested Reset Counter Systems to "Nested Counter Systems"}
\label{sec:NRCS-to-NCS}

This subsection presents a simulation of Nested Reset Counter Systems by plain Nested Counter Systems.
We first recall the definition of Nested Reset Counter Systems from~\cite{bala2026complexityncs}.

\smallskip

\AP A $d$-""Nested Reset Counter System"" (\emph{$d$-NRCS}) is a tuple $\Nn = (Q, \Delta_u, \Delta_r)$ where
$Q$ is a finite set of \emph{states},
$\Delta_u \subseteq \bigcup_{1 \leq i,j \leq d+1} (Q^i\times Q^j)$ is a finite set of \emph{update rules}, and
$\Delta_r \subseteq \bigcup_{1 \leq i \leq d} (Q^i \times Q \times Q^i)$ is a finite set of \emph{reset rules}.
For clarity,
an update rule $((q_0,\dots,q_i), (q'_0,\dots,q'_j)) \in \Delta_u$ is written $(q_0, \ldots, q_i) \xrightarrow{\upd} (q'_0, \ldots, q'_j)$ and
a reset rule $((q_0,\dots,q_i), p, (q'_0,\dots,q'_i)) \in \Delta_r$ is written $(q_0, \ldots, q_i) \xrightarrow{\rst{p}} (q'_0, \ldots, q'_i)$.
Notice that a $d$-"NCS" (as defined in the previous subsection) is a tuple $(Q, \Delta)$ such that $(Q, \Delta, \emptyset)$ is a $d$-"NRCS".

The set of \emph{configurations} $\Cc_\Nn$ of a $d$-"NRCS" $\Nn = (Q, \Delta_u, \Delta_r)$ is identical to the set of configurations of the $d$-"NCS" $(Q, \Delta_u)$.
Similarly,
the transition relation $\xrightarrow{r}$ of an update rule $r \in \Delta_u$ is identical to the transition relation $\xrightarrow{r}$ of the $d$-"NCS" $(Q, \Delta_u)$.
Given a reset rule $r = (q_0, \ldots, q_i) \xrightarrow{\rst{p}} (q'_0, \ldots, q'_i)$ with
$0 \leq i < d$,
we say that a configuration $\ncnf$ moves to some
configuration $\ncnf'$ using rule $r$
(written $\ncnf\xrightarrow{r}\ncnf'$) if there is a path $v_0,\dots,v_i$ in $\ncnf$ starting at the root labeled by $q_0,\dots,q_i$, and $\ncnf'$ is obtained from $\ncnf$ by
changing the label of each $v_k$ to $q_k'$ for $0 \leq k \leq i$, and removing every subtree rooted at a child $v$ of $v_i$ such that $v$ is labeled by $p$.
The coverability problem for "NRCS" is defined in the same way as for "NCS".

\begin{lemma}
  \label{lem:nrcs-to-ncs}
  For every fixed $d \geq 1$,
  the \textsc{$d$-"NRCS" Coverability} problem is reducible in polynomial time to the \textsc{$(d+1)$-"NCS" Coverability} problem.
\end{lemma}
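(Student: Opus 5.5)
The idea is to simulate a reset by \emph{detaching} subtrees instead of deleting them, using the extra level of nesting to hold the detached material. Given a $d$-"NRCS" $\Nn=(Q,\Delta_u,\Delta_r)$, we build a $(d+1)$-"NCS" $\Nn'$. Every vertex of an $\Nn$-configuration at depth $i$ is represented in $\Nn'$ by an \emph{active} vertex at depth $i$, with one exception: there is a fresh root, so the simulated tree sits one level lower. The new root is labeled by a control state, and its children are "epochs". More concretely, each original vertex $v$ carrying label $q$ gets, in $\Nn'$, a label $(q,e)$, where $e$ is a finite "epoch bit". Alternatively, and more cleanly, we insert an intermediate "container" layer: each original vertex $v$ has children that are containers, each labeled \emph{live} or \emph{dead}. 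The actual children of $v$ hang below these containers. Placing a container between every pair of levels would double the depth, though, so that exact variant is wrong. The plan therefore uses instead a single extra level, at the bottom, plus a freezing mechanism.

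The refined plan is to use \emph{lazy resets via phases}. A reset $\rst{p}$ at $v_i$ must kill all $p$-labelled children of $v_i$ at once. We simulate it in two stages. First, $v_i$ flips a phase marker from $b$ to $1-b$ and enters a \emph{cleaning mode}. During cleaning, we repeatedly select a child of $v_i$ whose label is $(p,b)$, where $b$ is the old phase, and relabel it to a dead label $\bot$. Dead vertices and their subtrees are never matched by any rule again, so they are garbage. Children are tagged with the phase of their parent at creation time; $v_i$ may leave cleaning mode at any moment. Coverability tolerates leftover garbage, since covering only asks for an induced subtree, and dead vertices can be ignored when matching the target.

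The difficulty is incomplete cleaning: a $p$-child that was never killed survives and stays usable. We handle this by making the child's tag relevant. A child with tag $b'$ different from the parent's current phase must not be used in rules as a $p$-vertex. However, the phase flips back after two resets, so one bit is not enough. This is exactly where the extra level $d+1$ is used. Instead of a phase bit, each vertex $v$ at level $i<d$ keeps its children under a fresh \emph{generation node* hmm}. Concretely, each reset creates a new intermediate child $g$ of $v_i$, and all future children of $v_i$ are created below the current generation node. Old generation nodes are marked dead by a single relabeling of $v_i$'s pointer. This extra layer adds one level only if it is used at a single tree level, yet resets may occur at all levels $0,\dots,d-1$, so the depth bookkeeping is the main obstacle. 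My first attempt at resolving it: encode each original vertex of depth $i$ together with its generation node as one $\Nn'$-vertex, with the label pair (state, generation-status). Only the leaves need to be pushed down one level, and in total this gives depth $d+1$.

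Correctness then has two directions. For soundness, project an $\Nn'$ run onto live material and show that it is an $\Nn$ run. For completeness, any $\Nn$ run lifts step by step. Finally, translate $\ncnf_i$ and $\ncnf_f$ into the encoded form; the translation is polynomial for fixed $d$.
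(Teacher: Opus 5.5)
There is a genuine gap: the proposal never reaches a construction that works. You correctly see why in-place resets fail. A cleaning mode that kills $p$-children one at a time may stop early, and a surviving $p$-child is a \emph{gain} over the true reset, which is unsound for coverability. Finitely many phase tags also recycle, so stale children come back to life.

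Your generation-node fix does not repair this. A generation layer under every vertex doubles the depth, as you note yourself. Merging a vertex with its generation node into one label brings back the finite-tag scheme you already rejected, since an NCS has no pointers, only finitely many labels. Your depth count (a fresh root plus a pushed-down bottom level) is also left unexplained. More fundamentally, declaring an old generation dead resets \emph{all} children of $v_i$, not only the $p$-labelled ones. The non-$p$ children, with their entire subtrees, would have to be moved into the new generation, and nothing in your plan does that. So the soundness argument (project onto live material) has no concrete construction to apply to.

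The missing idea is to swap which operation must be exhaustive. An NCS rule deletes a whole subtree in one step, so deletion should be the wholesale operation and \emph{copying} the piecemeal one. The paper first adds a fresh root $x$ so that no reset acts on the root; this is where the extra level goes. A reset at $v_i$ with $i\geq 1$ is then simulated by a DFS that copies the subtree of $v_i$, minus the subtrees of its $p$-labelled children, into a new sibling $v'_i$ of $v_i$. The DFS uses markers $\mathtt{src}(q)$ and $\mathtt{dst}(q)$ and carries the label being copied in the root state. Source nodes are deleted on the way up, and at the end the rest of $v_i$'s subtree, including every $p$-child, is deleted by a single rule. The root stays relabelled during the gadget, so no other rule can interleave. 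An incomplete copy only \emph{loses} material, giving a configuration $\preceq$-below the true reset result. This is harmless for coverability by monotonicity of NCS with respect to $\preceq$, and a faithful run of the gadget exists. For fixed $d$ the gadget has polynomially many rules.
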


The rest of this subsection presents our reduction.
We focus on reset rules $(q_0, \ldots, q_i) \xrightarrow{\rst{p}} (q'_0, \ldots, q'_i)$ that are applied to nodes of depth at least one (the depth of the root is zero),
i.e.,
such that $i \geq 1$.
Our recipe to simulate such a reset rule on a configuration $\ncnf$ may be summarized as follows:
\begin{enumerate}
\item
  pick a path $v_0,\dots,v_i$ in $\ncnf$ starting at the root labeled by $q_0,\dots,q_i$,
\item
  change the label of each $v_k$ to $q_k'$ for $0 \leq k \leq i$,
\item
  copy the subtree rooted at the node $v_i$ except for the children $v$ of $v_i$ that are labeled with $p$, and
\item
  delete the subtree rooted at $v_i$.
\end{enumerate}
It is understood that the root $v'_i$ of the copy performed in step (3) is a child of $v_{i-1}$ (i.e., $v'_i$ is a sibling of $v_i$).
As we cannot create a sibling of the root, this approach only works for $i \geq 1$.

\smallskip

The full version~\cite{decker2016freeze:arxiv} of the
paper~\cite{decker2016freeze} shows, albeit succinctly, how to  copy  a
subtree in lossy
manner.
The main idea is to copy the source subtree twice, in DFS fashion,
using a label $\mathtt{i}$ to mark the source node and two labels $\mathtt{o_1}$ and $\mathtt{o_2}$ to mark the copied nodes.
The source subtree is deleted along the way (when the DFS moves up).

Our simulation of reset rules (formally presented below) is inspired from the copy gadget of~\cite{decker2016freeze:arxiv}.
This simulation only uses update rules and is \emph{lossy} in the following sense:
\begin{itemize}
\item
  any configuration obtained after the gadget is $\preceq$-smaller than the configuration obtained after the reset rule, and
\item
  there is a run of the gadget that is not lossy, i.e., that provides the configuration obtained after the reset rule.
\end{itemize}

We want to stress that this simulation only works for reset rules $(q_0, \ldots, q_i) \xrightarrow{\rst{p}} (q'_0, \ldots, q'_i)$ with $i \geq 1$.
To simulate an arbitrary $d$-"NRCS" $\Nn = (Q, \Delta_u, \Delta_r)$ containing possibly reset rules $(q_0) \xrightarrow{\rst{p}} (q'_0)$ that apply to the root,
one may simply introduce an extra level,
i.e.,
consider the $(d+1)$-"NRCS" $\Nn' = (Q \cup \{x\}, \Delta'_u, \Delta'_r)$ where
each update rule $(q_0, \ldots, q_i) \xrightarrow{\upd} (q'_0, \ldots, q'_j)$ is replaced by $(x, q_0, \ldots, q_i) \xrightarrow{\upd} (x, q'_0, \ldots, q'_j)$ and
each reset rule $(q_0, \ldots, q_i) \xrightarrow{\rst{p}} (q'_0, \ldots, q'_i)$ is replaced by $(x, q_0, \ldots, q_i) \xrightarrow{\rst{p}} (x, q'_0, \ldots, q'_i)$.
The $(d+1)$-"NRCS" $\Nn'$ contains no reset rule on the root, so it can be transformed into an equivalent (with respect to coverability) $(d+1)$-"NCS".

\medskip

Let us now formally present our simulation of reset rules with update rules.
Consider a $d$-"NRCS" $\Nn = (Q, \Delta_u, \Delta_r)$ and a reset rule $r = (q_0, \ldots, q_i) \xrightarrow{\rst{p}} (q'_0, \ldots, q'_i)$ with $d > i \geq 1$.
We introduce new states $\mathtt{init}(r)$, $\mathtt{par}(r)$, $\mathtt{cp}(r)$, $\mathtt{cpd}(r, q)$, $\mathtt{cpu}(r)$, $\mathtt{src}(q)$ and $\mathtt{dst}(q)$, where $q$ ranges over $Q$.
The rules of the gadget simulating $r$ are given below.
In these rules,
\begin{itemize}
\item
  $\sigma$ stands for the sequence $\sigma = q'_1, \ldots, q'_{i-1}$,
\item
  $m$ stands for an arbitrary integer in $\{0, \ldots, d-i-1\}$, and
\item
  $x_1, \ldots, x_m, y, z$ stand for arbitrary states in $Q$ (note that the sequence $x_1, \ldots, x_m$ is possibly empty).
\end{itemize}
As $d$ is fixed,
the gadget simulating $r$ contains polynomially many rules.

\subparagraph{Initialization: case $i = 1$}
\[
\begin{array}{rcl}
(q_0, q_1) & \xrightarrow{\upd} & (\mathtt{init}(r), \mathtt{src}(q'_1))
\\
(\mathtt{init}(r)) & \xrightarrow{\upd} & (\mathtt{cp}(r), \mathtt{dst}(q'_1))
\end{array}
\]

\subparagraph{Initialization: case $i \geq 2$}
\[
\begin{array}{rcl}
(q_0, \ldots, q_i) & \xrightarrow{\upd} & (\mathtt{init}(r), q'_1, \ldots, q'_{i-2}, \mathtt{par}(r), \mathtt{src}(q'_i))
\\
(\mathtt{init}(r), q'_1, \ldots, q'_{i-2}, \mathtt{par}(r)) & \xrightarrow{\upd} & (\mathtt{cp}(r), q'_1, \ldots, q'_{i-2}, q'_{i-1}, \mathtt{dst}(q'_i))
\end{array}
\]

\subparagraph{Move down and copy:}
\[
\begin{array}{rcl}
(\mathtt{cp}(r), \sigma, x_1, \ldots, x_m, \mathtt{src}(y), z) & \xrightarrow{\upd} & (\mathtt{cpd}(r, z), \sigma, x_1, \ldots, x_m, y, \mathtt{src}(z))
\\
(\mathtt{cpd}(r, z), \sigma, x_1, \ldots, x_m, \mathtt{dst}(y)) & \xrightarrow{\upd} & (\mathtt{cp}(r), \sigma, x_1, \ldots, x_m, y, \mathtt{dst}(z))
\end{array}
\]

Recall that we want to copy the subtree rooted at the node $v_i$ except for the children $v$ of $v_i$ that are labeled with $p$.
To do so,
\textbf{the above two rules are \emph{not} included when $m = 0$ and $z = p$}.

\subparagraph{Move up and delete source:}
\[
\begin{array}{rcl}
(\mathtt{cp}(r), \sigma, x_1, \ldots, x_m, y, \mathtt{src}(z)) & \xrightarrow{\upd} & (\mathtt{cpu}(r), \sigma, x_1, \ldots, x_m, \mathtt{src}(y))
\\
(\mathtt{cpu}(r), \sigma, x_1, \ldots, x_m, y, \mathtt{dst}(z)) & \xrightarrow{\upd} & (\mathtt{cp}(r), \sigma, x_1, \ldots, x_m, \mathtt{dst}(y), z)
\end{array}
\]

\subparagraph{End:}
\[
\begin{array}{rcl}
(\mathtt{cp}(r), \sigma, \mathtt{src}(y)) & \xrightarrow{\upd} & (\mathtt{end}(r), \sigma)
\\
(\mathtt{end}(r), \sigma, \mathtt{dst}(y)) & \xrightarrow{\upd} & (q'_0, \sigma, y)
\end{array}
\]

Note that in the two rules above we could equivalently replace $y$ by $q'_i$.

\subsection{From "PCA" to "Nested Counter Systems" and back}

We show in the remainder of this section that \Reachdk{d}{k=\infty} for "PCA"
is (polynomially-) equivalent to \textsc{$d$-"NCS" Coverability}.
We start with the upper bound, by providing a reduction from coverability for "PCA" over trees of depth at most $d$ to coverability for "NCS"
of nesting depth at most $d$.

Let $\Aa=(\Stt_\Aa,\Act,\Delta_\Aa,s^\init)$ be a "PCA" and $d\in\Nat$ a positive integer. We construct a
$d$-"NCS" $\Nn=(Q_\Nn,\Delta_\Nn)$, with $Q_\Nn=\Stt_\Aa \cup \{\start\}$ and 
$\Delta_\Nn$ defined 
as follows.
For the initialization, the root is in state $\start$ and it generates the tree
with a set of rules spawning new levels
in state $s^\init$. The initialization is finished with the root going from $\start$ to $s^\init$:
For all $0 \le i< d$, $((\start,\underbrace{s^\init,\dots,s^\init}_{i\text{ times}}),(\start,\underbrace{s^\init,\dots,s^\init}_{i\text{ times}},s^\init))\in\Delta_\Nn$, and
$((\start),(s^\init))\in\Delta_\Nn$.

For each transition, we have one rule per tree level:
For all $q\act{\dw m}q'\in\Delta_\Aa$, $s\act{\up
m}s'\in\Delta_\Aa$, $0\leq i <d$, and $q_0,\dots,q_{i-1} \in \Stt_\Aa$, we let $((q_0,\dots,q_{i-1},q,s),(q_0,\dots,q_{i-1},q',s'))\in\Delta_\Nn$;
for all $q\act{\tau}q'\in\Delta_\Aa$, $0\leq i \leq d$, and $q_0,\dots,q_{i-1} \in \Stt_\Aa$, we have $((q_0,\dots,q_{i-1},q),(q_0,\dots,q_{i-1},q'))\in\Delta_\Nn$.

This "NCS" simulates $\Aa$ for every tree of depth at most $d$. Let $\Tr$ be a tree of depth at most $d$. For every configuration $\cnf$ of $\Aa$
over $\Tr$ we can construct a configuration $\ncnf$
of $\Nn$ over $\Tr$ such that %
$h(p)=\cnf(p)$.
We have that a configuration $\cnf$ is reachable in $\Aa$ on trees of depth at most $d$ iff there exists a configuration $C$ of $\Nn$ simulating $\cnf$ that is reachable from the configuration
$\start$. So a state $s$ is "$(d,k=\infty)$-coverable" in $\Aa$ iff the
configuration with the root in state $s$ is coverable from configuration $\start$ in $\Nn$.

\begin{lemma}\label{lem:ncs-upper}
  The complexity of
  \Reachdk{d}{k=\infty} for "PCA" (with $d \in \Nat$) is in $\FF_{\Omega_d}$, and the
  complexity of $(k=\infty)$-\textsc{Coverability}$(d)$  is in
  $\FF_{\varepsilon_0}$.
\end{lemma}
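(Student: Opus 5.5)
The plan is to combine the reduction constructed just above with the known upper bounds for \textsc{$d$-"NCS" Coverability} from the theorem of~\cite{decker2016freeze,bala2026complexityncs}. First, I make the correctness of the construction precise. Fix a "PCA" $\Aa$ and $d$, and let $\Nn$ be the $d$-"NCS" built above.

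\textbf{Soundness.} I show that every run of $\Nn$ from the one-node configuration $\start$ can be rearranged so that all spawning rules happen first. Only spawning rules create nodes, and no rule deletes nodes, because every rule has $i=j$ except the spawning ones where $j=i+1$. Spawning rules only read the labels $\start$ and $s^\init$ on a root path, and the root leaves $\start$ only through $((\start),(s^\init))$, after which no spawning is possible. So the run already has the form: build a tree $\Tr$ of depth at most $d$ with all labels $s^\init$, switch the root to $s^\init$, then apply simulation rules. Each simulation rule acts on a root path $v_0,\dots,v_i$ (or $v_0,\dots,v_{i+1}$). It changes only the labels of $v_i$, or of $v_{i}$ and its child $v_{i+1}$, according to a $\tau$ transition or a matching $\dw m/\up m$ pair of $\Aa$. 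This is exactly a step $\sytau{v_i}$ or $\sycom{v_i}{v_{i+1}}{m}$ of $\Aa$ over $\Tr$.

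\textbf{Completeness.} Conversely, given a run of $\Aa$ over a tree $\Tr$ of depth at most $d$, I first build $\Tr$ with spawning rules, adding one path prefix at a time in DFS order. I then move the root to $s^\init$ and replay each action with the corresponding rule.

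Hence $s$ is "$(d,\infty)$-coverable" iff the single-node configuration labelled $s$ is coverable from $\start$ in $\Nn$. Coverability is indeed the right notion, since covering a one-node tree just means the root has label $s$.

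\textbf{Size.} The rules are indexed by a transition (or a pair of transitions), a level $i\le d$, and a prefix $q_0,\dots,q_{i-1}\in\Stt_\Aa^{i}$. Their number is therefore $O(|\Delta_\Aa|^2\cdot|\Stt_\Aa|^{d})$. For fixed $d$ this is polynomial, so \Reachdk{d}{k=\infty} reduces in polynomial time to \textsc{$d$-"NCS" Coverability}, which is in $\FF_{\Omega_d}$. The classes $\FF_\alpha$ are closed under such reductions~\cite{schmitz2016complexity}, which gives the first claim.

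\textbf{The main obstacle: $d$ part of the input.} Here the prefix enumeration yields an exponential blow-up. I would avoid it by making the prefix irrelevant: in the \emph{semantics} of NCS a rule must match the whole root path, so the prefix cannot simply be dropped. Instead I would argue that the blow-up is harmless for the target class. The size is $2^{O(d\log|\Aa|)}$, and the $\FF_{\varepsilon_0}$ upper bound gives time $F_{\Omega_{p(n)}}$ for some primitive recursive $p$ applied to the instance size. An exponential (indeed elementary) preprocessing stays inside $\FF_{\varepsilon_0}$, since $\FF_{\varepsilon_0}$ absorbs primitive recursive reductions by the standard robustness of the fast-growing classes $\FF_\alpha$ for $\alpha\ge 3$~\cite{schmitz2016complexity}. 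Hence $(k=\infty)$-\textsc{Coverability}$(d)$ is in $\FF_{\varepsilon_0}$. If a polynomial reduction were preferred, one could alternatively let every node carry a wildcard-free copy of its parent's state, but the robustness argument suffices.
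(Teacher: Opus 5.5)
Your proposal is correct and follows the paper's route: you build the same $d$-NCS (spawning rules from $\start$, then one simulation rule per transition, level and ancestor prefix), argue that coverability of the root state is preserved, and invoke the known $\FF_{\Omega_d}$ and $\FF_{\varepsilon_0}$ upper bounds for NCS coverability. You also point out that the prefix enumeration is exponential when $d$ is part of the input, and that this is harmless because $\FF_{\varepsilon_0}$ is closed under primitive-recursive reductions; the paper leaves this step implicit, and your argument supplies it correctly.
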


\medskip

For the lower bound we  show how to simulate a $d$-"NCS" with a "PCA" on trees of
depth at most $d$. Informally, we use each process to store one state of
$\Nn$, with
its children representing its subconfiguration. When doing a transition, we nondeterministically synchronize a branch of the tree, making sure every node
of the branch has the correct state, and then change each state along the branch.

Let $d\in\Nat$ be a positive integer, $\Nn=(Q_\Nn,\Delta_\Nn)$ a $d$-"NCS", and
$\ncnf^\init$ and $\ncnf^\final$  two configurations of $\Nn$.
For the reduction it is convenient to construct a "PCA" $\Aa$ with a  distinct LTS for the root (see Remark~\ref{rem:mutli_init_st}).
The set of states of $\Bb$ is $Q_\Nn$, plus sets of intermediary states used in
initialization, application of rules and termination. The transitions are
obtained as follows.
For every rule, we initialize the transition and choose a branch by synchronizing processes from the root,
checking at the same time the states on the path.
Then we resolve the transition from the process at the deepest level to the root.
If the rule removes part of the configuration, we put the corresponding processes in a state
$s^\dead$, in which they will not be able to synchronize anymore, cutting the subconfiguration from the rest of the tree.

For the initialization, we construct the desired configuration of $\Aa$ from
the leaves to
the root. Each leaf sends the vertex it represents to its parent, and the latter waits to receive every vertex of its children before
sending itself to its parent, and we repeat this up to $\root_\ncnf$.
To check if the final configuration is covered, we use a similar construction.

\ifthenelse{\boolean{arXivLongVersion}}{%
A formal definition of the transitions used in this reduction can be seen in Appendix~\ref{app:NCS_to_PCA}.
}{}

\smallskip

The relationship between $\Nn$ and $\Aa$ can be formalised by the following
notion of simulation. Let $\ncnf=(V,E,\root_\ncnf,h)$ be a configuration of $\Nn$, and $\cnf$
be a configuration of $\Aa$ over some tree $\Tt=(P,\Ch)$. We say that $\cnf$
simulates 
$\ncnf$ if all  conditions below hold:
\begin{itemize}
  \item $\ncnf$ is a subtree of $\Tt$ with the same root.
  \item For every $v\in V$, $\cnf(v) = h(v)$.
  \item For every $v \in P \backslash V$, $\cnf(v) = s^\init$,
$\cnf(v) = s^\dead$ or there exists a ancestor $v'$ of $v$ in  $P \backslash V$ such that $\cnf(v') = s^\dead$.
\end{itemize}
For $\ncnf^\init =(V,E,\root_\ncnf,h)$, and for every tree $\Tt$
that has $(V,E)$ as subtree, the first configuration $\cnf$ reachable by $\Aa$ over
$\Tt$ such that $\cnf(\root) \in Q_\Nn$ simulates $\ncnf^\init$. If $\Tt$ is not big enough, then $\root$ will never reach a state in $Q_\Nn$.
We also have that for every configuration $\ncnf$ of $\Nn$ reachable from $\ncnf^\init$, there
exists a tree $\Tt$ and a reachable configuration $\cnf$ of $\Aa$ over $\Tt$ that simulates
$\ncnf$. Conversely, for any tree $\Tt$, if a reachable configuration $\cnf$ of $\Aa$ over $\Tt$
simulates some configuration $\ncnf$ of $\Nn$, then this configuration is reachable from $\ncnf^\init$.
From this we get that $\ncnf^\final$ is coverable from $\ncnf^\init$ in $\Nn$ iff 
$s^\final$ is "$(d,k=\infty)$-coverable" in $\Aa$.

\begin{lemma}\label{lem:ncs-lower-bound}
 The problem $(d,k=\infty)$-\textsc{Coverability}  for "PCA" (with $d \in \Nat$) is $\FF_{\Omega_{d-1}}$-hard,
  and $(k=\infty)$-\textsc{Coverability}$(d)$ is $\FF_{\varepsilon_0}$-hard.
\end{lemma}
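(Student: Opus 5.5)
The lower bound follows by composing two reductions: the one just sketched from $d$-"NCS" coverability to "PCA" coverability over trees of depth at most $d$, and Lemma~\ref{lem:nrcs-to-ncs} together with the hardness result of~\cite{bala2026complexityncs}. For a fixed $d\geq 2$, \textsc{$(d-1)$-"NRCS" Coverability} is $\FF_{\Omega_{d-1}}$-complete by~\cite{bala2026complexityncs}. Lemma~\ref{lem:nrcs-to-ncs} reduces it in polynomial time to \textsc{$d$-"NCS" Coverability}, so the latter is $\FF_{\Omega_{d-1}}$-hard. For $d=1$ the claim is read with $\FF_{\Omega_0}$ as \EXPSPACE, and it is already covered by Lemma~\ref{lem:exp-comp}. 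It then remains to give a polynomial-time reduction from \textsc{$d$-"NCS" Coverability} to \Reachdk{d}{k=\infty}. Since the fast-growing classes are closed under such reductions, this transfers hardness.

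For the second reduction I will use the "PCA" $\Aa$ described above, with a separate root LTS as allowed by Remark~\ref{rem:mutli_init_st}. Its size is polynomial in $|\Nn|+|\ncnf^\init|+|\ncnf^\final|$: each rule of length at most $d+1$ gives a chain of intermediate states, one per level. The initialization and final checks are hard-coded gadgets whose size is linear in the sizes of the two given configurations. Correctness rests on the simulation relation and on three claims, proved by induction on run length.
\begin{itemize}
\item \textbf{Initialization.} Starting from the initial configuration of $\Aa$, the first configuration in which the root is in $Q_\Nn$ simulates $\ncnf^\init$.
\item \textbf{Completeness.} Every step $\ncnf\xrightarrow{r}\ncnf'$ can be mimicked by a sequence of synchronizations. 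These go down a branch checking the labels $q_0,\dots,q_i$, then resolve bottom-up. A creating rule recruits processes in $s^\init$, which is why the tree is chosen large enough in advance. A deleting rule sends the child at level $j+1$ to $s^\dead$.
\item \textbf{Soundness.} Every run of $\Aa$ can be cut into blocks, each of which implements a single NCS rule on a simulated configuration.
\end{itemize}
Finally, $s^\final$ is reached at the root exactly when a bottom-up check succeeds in matching $\ncnf^\final$ as an induced subtree with the same root, which is the definition of covering.

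The main obstacle is soundness, because interleavings are an issue. Several branches might start executing rules concurrently, and dead or unused processes must never interfere. I will enforce mutual exclusion through the root: a rule starts only when the root leaves a state in $Q_\Nn$ for an intermediate state. The root returns to $Q_\Nn$ only after the acknowledgement has come back up the chosen branch. Every process on that branch is in an intermediate state until it is resolved, so no other rule can pass through it. A process in $s^\dead$ has no transitions, so its subtree is disconnected for good. This gives the invariant that, whenever the root is in $Q_\Nn$, the configuration simulates some $\ncnf$ reachable from $\ncnf^\init$.

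For the parametric statement, where $d$ is part of the input, I plan to check that the construction is uniform in $d$ and still polynomial in $d$. The intermediate states track positions along a branch, of which there are at most $d$. Composing with the $\FF_{\varepsilon_0}$-hardness of \textsc{"NCS" Coverability} with $d$ in the input, from~\cite{decker2016freeze,bala2026complexityncs}, then gives $\FF_{\varepsilon_0}$-hardness of $(k=\infty)$-\textsc{Coverability}$(d)$.
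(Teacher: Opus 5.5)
Your proposal is correct and follows the paper's route: you get $\FF_{\Omega_{d-1}}$-hardness of $d$-NCS coverability from Lemma~\ref{lem:nrcs-to-ncs} and the NRCS result of~\cite{bala2026complexityncs}, and compose it with the same root-driven, branch-synchronizing PCA and the same simulation argument. One point to tighten, which the paper also glosses over: the initialization and final-check gadgets can fire without the root's involvement, so your invariant (\emph{whenever the root is in $Q_\Nn$, the configuration simulates a reachable NCS configuration}) only holds up to frozen or garbage subtrees, and you need to commute the final-check actions to the end of the run, which is possible because they act on processes that no later rule simulation touches.
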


\section{Phase-bounded "PCA"}

In this section we present our  results on the complexity of
the coverability problem for phase-bounded "PCA".

\subsection{Lower bound for phase-bounded PCA}
We start with the lower bound for \Reachdk{\infty}{k}, so for $k$ phases and no
restriction on the depth.
We show a reduction
from the coverability problem for a succinct model of Petri nets, called
Transducer Defined Petri Nets~\cite{BaumannMTZ20}. 
We recall that a Petri Net is given as a tuple $N=(P,T,F,p_0)$ with $P$ a finite set of places,
$T$ a finite set of transitions, $F\subseteq (P\times T) \cup (T \times P)$ the flow relation, and 
$p_0\in P$ the initial place. A marking of $N$ is a function $\mrk\in P
\rightarrow \Nat$, and we say
that there are  $\mrk(p)$ tokens on place $p$. A transition $t\in T$ is firable in a marking $\mrk$ if $m(p) > 0$ for all $p$
such that $(p,t)\in F$.
It can then be fired, leading to the marking $\mrk'$ with $\mrk'(p) = \mrk(p) + F(t,p) - F(p,t)$ for all $p\in P$.
We also write $\mrk\xrightarrow{t}\mrk'$.
We say that a marking $\mrk$ is reachable in $N$ if there exists a sequence $\mrk_0\xrightarrow{t_1}\mrk_1\dots\mrk_{n-1}\xrightarrow{t_n}\mrk$ 
where $\mrk_0(p_0) = 1$ and $\mrk_0(p) = 0$ for all $p\neq p_0$.
A marking $\mrk$ is coverable in $N$ if there exists a marking $\mrk'$ such that $\mrk(p) \leq \mrk'(p)$ for all $p\in P$.
The question whether a marking is coverable is \EXPSPACE-complete~\cite{lipton76,rackoff1978covering},
whereas the reachability of a marking is Ackermann-complete~\cite{LerouxS19,Leroux21,CzerwinskiO21}.

\smallskip

We fix an alphabet $\S$.
An \emph{$n$-ary transducer} $\Mm=(Q,q_0,Q_f,\Delta)$ over alphabet $\Sigma$
consists of a set of states $Q$, an initial state $q_0\in Q$, a set of final states $Q_f\subseteq Q$ and a transition relation 
$\Delta \subseteq Q \times \Sigma^n \times Q$. We write $q \xrightarrow{a_1,\dots,a_n} q'$ for a transition between $q$ and $q'$. The size of $\Mm$ is 
$|\Mm|=|\Delta|$. An $n$-ary transducer $\Mm$ defines the relation $\Ll(\Mm)$ containing the $n$-tuples $(w_1,\dots,w_n)$ of
words over $\S$ 
for which there exists a sequence of transitions
$q_0 \xrightarrow{a_{1,1},\dots,a_{n,1}}q_1
\xrightarrow{a_{1,2},\dots,a_{n,2}}\dots \xrightarrow{a_{1,m},\dots,a_{n,m}}
q_m$ of $\Mm$ with $q_m\in Q_f$
and $a_{i,1}\dots a_{i,m} = w_i$ for all $i \in [1 \dots n]$.

\begin{definition}
    \AP A ""transducer-defined Petri Net"" (\emph{TdPN})
    $\Nn=(\ell,\Tmove,\Tjoin,\Tfork,\winit)$ consists of an integer $\ell
    \in\Nat$, 
    a binary transducer $\Tmove$ and two ternary transducers $\Tjoin,\Tfork$,
    all over input/output alphabet $\Sigma$, and an initial word $\winit \in
    \Sigma^\ell$. The associated  Petri Net
    $\text{PN}(\Nn)=(P,T,F,p_0)$ has $P = \Sigma^\ell$ as set of places and set $T$ of
    transitions defined as disjoint union of:
    \begin{itemize}
            \item $T_m = \{(w,w') \in \Sigma^\ell \times \Sigma^\ell \mid (w,w')\in \Ll(\Tmove)\}$
            \item $T_j = \{(w,w',w'') \in \Sigma^\ell \times \Sigma^\ell\times \Sigma^\ell \mid (w,w',w'')\in \Ll(\Tjoin)\}$ and
            \item $T_f = \{(w,w',w'') \in \Sigma^\ell \times \Sigma^\ell\times \Sigma^\ell \mid (w,w',w'')\in \Ll(\Tfork)\}$
    \end{itemize}
 The initial place of $\text{PN}(\Nn)$ is $p_0 = w_\init$. The semantics of transitions in terms of
 flow relation is:
        \begin{itemize}
            \item if $ t = (p,p') \in T_m$ then $(p,t),(t,p')\in F$
            \item if $ t = (p,p',p'') \in T_j$ then $(p,t),(p',t),(t,p'')\in F$
            \item if $ t = (p,p',p'') \in T_f$ then $(p,t),(t,p'),(t,p'')\in F$
        \end{itemize}

    An accepting run of any of the transducers corresponds to a single
    transition of $\text{PN}(\Nn)$, and is
    called a transducer move. The size of $\Nn$ is defined as $|\Nn| = \ell +
    |\Tmove| + |\Tfork| + |\Tjoin|$, with $\ell$ in unary encoding.
\end{definition}

A marking $\mrk$ of a "TdPN" $\Nn$ is a total function $\mrk\in \Sigma^\ell
\rightarrow \Nat$. Coverability of markings (or places) is defined as usual.

\begin{center}
  \begin{tabularx}{\textwidth-2cm}{lX}
    \toprule
    \multicolumn{2}{c}{\textsc{TdPN-Coverability}} \\
    \midrule
    \bfseries Input:
    & A "TdPN" $\Nn$  and a word $\wfinal \in \Sigma^\ell$.
    \\
    \bfseries Output:
    & Yes if the place $\wfinal$ is coverable in $\text{PN}(\Nn)$.
    \\
    \bottomrule
  \end{tabularx}
\end{center}

We will use the following result:
\begin{theorem}[\cite{BaumannMTZ20}]
    \textsc{TdPN-Coverability} is 2\EXPSPACE-complete.
\end{theorem}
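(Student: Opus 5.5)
The statement has two halves, and I would prove them separately. Throughout, let $n=|\Nn|$. The net $\text{PN}(\Nn)$ has $|\Sigma|^\ell\le 2^{O(n\log n)}$ places. Each transition touches at most three places with unit weights. Membership of a tuple in $T_m$, $T_j$ or $T_f$ is decidable in space polynomial in $n$: we simulate the transducer letter by letter along the three words, each of which has length $\ell$.

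\emph{Upper bound.} I would invoke Rackoff's bound~\cite{rackoff1978covering} on the underlying Petri net, viewed as a VASS of dimension $D=|\Sigma|^\ell$ with all updates in $\{-1,0,1\}$. If a place is coverable at all, it is coverable by a run of length at most $2^{2^{O(D\log D)}}$. Along such a run every counter stays below this bound, so each counter fits in $2^{O(D\log D)}$ bits. A full marking therefore takes $D\cdot 2^{O(D\log D)}$ bits, which is doubly exponential in $n$.

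A nondeterministic procedure then guesses the run one transition at a time, storing only the current marking and a step counter. At each step it guesses the (at most three) words of length $\ell$ naming the transition, checks them against the relevant transducer in polynomial space, and checks firability and updates the marking. This puts the problem in $\mathsf{N2EXPSPACE}$, which equals 2\EXPSPACE\ by Savitch's theorem. The important point is that the exponentially large net is never built explicitly.

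\emph{Lower bound.} I would adapt Lipton's \EXPSPACE-hardness construction~\cite{lipton76}. Lipton simulates a counter machine whose counters are bounded by $2^{2^N}$, using a Petri net with $O(N)$ levels. Level $i$ keeps complementary counter pairs $s_i,\bar s_i$ with $s_i+\bar s_i=F_i$, where $F_{i+1}=F_i^2$. A zero test at level $i+1$ is a subroutine that repeatedly calls the increment/decrement subroutines of level $i$.

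To reach 2\EXPSPACE\ I want $N=2^n$ levels, so that counters range up to $2^{2^{2^n}}$, while keeping the description polynomial. To do this I encode each place of Lipton's net as a word of length $\ell=O(n)$. The word consists of a level index $i$ in binary on $n$ bits, together with a bounded-length tag from a fixed finite alphabet that identifies which place of the level-$i$ gadget is meant (counter places, control places of the subroutines, return-address places).

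Lipton's gadget is identical at every level up to the indices $i$ and $i\pm1$, and binary increment and decrement of an $n$-bit index are computable by a letter-to-letter transducer reading the two words in parallel. Hence each Lipton transition schema becomes a transducer of polynomial size: a move, a join, or a fork, depending on whether the schema has one or two input places and one or two output places. The simulated 2\EXPSPACE\ Turing machine (equivalently, a counter machine with $2^{2^{2^n}}$-bounded counters) contributes the top level, with polynomially many extra tag letters.

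\emph{Main obstacle.} The step I expect to be the real difficulty is putting Lipton's transitions into the restricted shapes allowed here. Each move, join or fork transition has at most two input places and at most two output places. Lipton's subroutine-call mechanism, however, uses transitions such as ``consume control token and counter token, produce next control token'', which sometimes involve three places on one side, and it also needs return addresses for the nested calls. For the return addresses I would store, per level, a finite ``caller'' tag inside the control-place word. Since level $i$ is only ever called from level $i+1$ or from itself, one bit of return information per level suffices, and it can be kept in the tag. Larger-arity transitions I would split into chains of join and fork steps through fresh intermediate control places, which remains correct because the control token is unique. Finally, I would verify the invariant $s_i+\bar s_i=F_i$ level by level, by induction on $i$ exactly as in Lipton's proof. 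The uniformity of the gadgets is what makes this induction carry over to exponentially many levels.
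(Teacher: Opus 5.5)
The paper gives no proof of this statement. It is imported from \cite{BaumannMTZ20}, so there is no in-paper argument to compare against. Your proposal is a correct reconstruction of the standard argument.

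For membership, the point that makes it work is the one you use. Rackoff's bound for a Petri net without control states depends only on the dimension $D=|\Sigma|^\ell$ and the (unit) norms, not on the exponentially many transitions. So a procedure that guesses the run, stores one marking, and checks each guessed transition against its transducer on the fly runs in doubly exponential nondeterministic space, and the net is never built.

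For hardness, Lipton's construction with $2^{\mathrm{poly}(n)}$ levels is the natural succinct encoding. The binary level index sits inside the place names, and transducers check index equality or successor. Lipton's level-by-level induction is indifferent to the number of levels.

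Two details of your sketch need adjusting, though neither threatens the result.

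\emph{Transition shapes.} TdPN transitions come in exactly three shapes: one-to-one, two-to-one and one-to-two. There is no two-to-two shape, so saying they have ``at most two inputs and at most two outputs'' is misleading. Even Lipton's elementary transfer steps $(c,\bar s)\to(c',s)$ must be split into a join followed by a fork through a fresh control place. Also, since $F$ is a relation rather than a multiset, a join cannot remove two tokens from the same place. This matters only if your level-$0$ test does so, and it is fixed by sequentialising.

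\emph{Return addresses.} The justification for ``one bit of return information'' is off. What must be remembered is the continuation point in the caller's code, not the caller's level. A level-$(i+1)$ gadget invokes level-$i$ tests from several points and for several variables. A constant-size tag is what you need. Lipton's own mechanism provides it without passing anything to the callee: the call is a fork that parks the caller's token on a level-$(i+1)$ waiting place named by the call site, and the return is a join of that token with the callee's outcome token.
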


By abuse of language, we call  a total function $\mrk: \Sigma^l \rightarrow
\Nat$ a marking of $\Nn$,
and we say that there are $\mrk(w)$ tokens on place $w\in\Sigma^l$.

To reduce \textsc{TdPN-Coverability} to \textsc{$(d=\infty)$-Coverability($k$)}, we  construct a "PCA" on
a rooted tree of depth 2. Informally, the root simulates the transducers and each
child of the root stores a word, representing the encoding
of a place of the "TdPN". To do so, a node at depth
1 stores in each of its children a letter of the word, together with its
position from $[1 \dots \ell]$.

Each transition of the "TdPN" is initiated by the root by first synchronizing with
one, two or three children (depending on
the step), to define their role in the transition. Then $\root$ will communicate each letter of
the words needed for that transition to the participating children,
checking at the same time the validity of the transition of the chosen transducer.
At the
end of the transition, children that have participated
as inputs (outputs, resp.) go to state $\mathsf{dead}$
($\mathsf{full}$, resp.). The latter 
state means that such nodes are ready to
be used as inputs later.
At a given point, the number of children in state $\mathsf{full}$
with children corresponding to some word $w \in \S^\ell$ 
represents the number of tokens in the "TdPN" place named $w$.
\smallskip

In the reduction we  use a different LTS for $\root$. This is
not a restriction, see~Remark~\ref{rem:mutli_init_st}.

Let $\Nn=(\ell, \Tmove,\Tjoin,\Tfork,\winit)$ be the "TdPN"  and
$\wfinal\in\Sigma^\ell$ the final word.
We construct a "PCA" $\Aa(\Nn,\wfinal)$ with two distinct LTS,  $\Aa_\root$ for the root and $\Aa$ for the other processes, over the same set of actions $\Act$.
The state space of $\Aa_\root$ will be the disjoint union of 
$\{\init_r,\final_r,\idle\}$, a set of states for each of the two words
$\winit$ and
$\wfinal$, and one set of states per transducer. The general structure of this LTS is
shown in Figure~\ref{fig:root_auto_TdPN}.

\begin{figure}
    \begin{center}
        \begin{tikzpicture}[initial text=,
    every state/.style={inner sep=2pt,minimum size=10pt,ellipse}]
    \node[state,initial,circle] (init) at (0,0) {$\init_{r}$};
    \node[state,dashed, inner sep=10pt] (winit) at (3,0) {$\winit$};
    \node[state,circle] (final) at (0,-2) {$\final_{r}$};
    \node[state,dashed, inner sep=10pt] (wfinal) at (3,-2) {$\wfinal$};  
    \node[state,circle] (idle) at (6,-1) {$\idle$};
    \node[state, dashed, inner sep=10pt] (Tmove) at (6.75,0.5) {$\Tmove$};
    \node[state, dashed, inner sep=10pt] (Tfork) at (6.75,-2.5) {$\Tfork$};
    \node[state, dashed, inner sep=10pt] (Tjoin) at (8.25,-1) {$\Tjoin$};

    \path[->]   (init) edge[bend left=10pt,shorten >=-8pt] node[above] {$\dw\start^{1}$} (winit)
                (winit) edge[shorten <=-8pt,bend left = 10pt] (idle)
                (wfinal) edge[shorten <=-8pt,bend left = 10pt] (final)
                (idle) edge[shorten >=-8pt,bend left = 10pt,sloped] node[above] {$\dw\bar{\start}^{1}$} (wfinal)
                (idle) edge[shorten >=-8pt,out = 80, in = -120] (Tmove)
                (Tmove) edge[shorten <=-8pt,out=-90,in=50] (idle)
                (Tfork) edge[shorten <=-8pt,out=90,in=-50] (idle)
                (idle) edge[shorten >=-8pt,out = -80, in = +120] (Tfork)
                (idle) edge[shorten >=-8pt,bend left = 10pt] (Tjoin)
                (Tjoin) edge[shorten <=-8pt,bend left = 10pt] (idle)
                ;
\end{tikzpicture}
        \caption{Automaton for root. $\bar{\start}^{\text{arg}}$ means consuming
        a word, whereas $\start^{\text{arg}}$ means creating a word.}
        \label{fig:root_auto_TdPN}
    \end{center}
\end{figure}

The LTS $\Aa$ will have two tasks. The first one is for  processes at level
1, and consists to store a word when asked by the root, and then
send it back later. Such a process $p$ is awakened when it synchronizes with
$\root$ with the message $\start^{\text{arg}}$ (for some $1 \le \text{arg}\leq3$). Process $p$ guesses
each letter of a word of length $\ell$, communicates it to one of its dormant
children together with its
position, and waits for the root to acknowledge the letter.
After the whole word has been stored, $p$ is in state $\full$.

Later, a level 1 process can be called by $\root$ to give its word back, when $\root$ synchronizes with it 
with some message $\bar{\start}^{\text{arg}}$. It will then
recover one by one
the letters of the word from its children and relay the information to the root.
After the word is sent, $p$ is in state $\dead$. The corresponding LTS
is depicted in Figure~\ref{fig:word_auto_TdPN}.
\begin{figure}
    \begin{center}
        \begin{tikzpicture}[initial text=,
    every state/.style={inner sep=2pt,minimum size=10pt,ellipse}]
    \newcommand{\dist}{2.5cm}
    \newcommand\secondline{-1.75}
    \node[state,initial,circle] (init) at (0,0) {\init};
    \node[state] (arg1) at (3,0) {$\text{arg},1$};
    \node   ()  at (1.4,-0.5) {\scriptsize for $1 \le\text{arg}\leq3$};
    \node[align=center] () at (5,0)  {$\dw(a,1)\up(\text{arg},a)$\\$\forall a \in\Sigma$};
    \node[state] (arg2) at (7,0)  {$\text{arg},2$};
    \node[align=center] () at (8.2,0)  {\dots};
    \node[state] (argl) at (9.3,0) {$\text{arg},\ell$};
    \node[align=center] () at (11.3,0)  {$\dw(a,\ell)\up(\text{arg},a)$\\$\forall a \in\Sigma$};
    \node[rotate=40]   ()  at (13.2,-1.5) {\scriptsize for $1 \le \text{arg}\leq3$};

    \node[state,circle] (full) at (13,0) {$\full$};
    \node[state] (narg1) at (11,\secondline) {$\bar{\text{arg}},1$};
    \node[align=center] () at (9,\secondline)  {$\dw(\bar{a},1)\up(\text{arg},\bar{a})$\\$\forall a \in\Sigma$};
    \node[state] (narg2) at (7,\secondline) {$\bar{\text{arg}},2$};
    \node       () at (5.85,\secondline)      {$\dots$};
    \node[state] (nargl) at (4.6,\secondline) {$\bar{\text{arg}},\ell$};
    \node[align=center] () at (2.6,\secondline)  {$\dw(\bar{a},\ell)\up(\text{arg},\bar{a})$\\$\forall a \in\Sigma$};

    \node[state,circle] (dead) at (0.8,\secondline) {$\dead$};

    \newcommand{\bendangle}{30}
    \path[->]   (init) edge node[above] {$\up\start^{\text{arg}}$} (arg1)
                (arg1) edge[bend left=\bendangle] (arg2)
                (arg1) edge[bend right=\bendangle] (arg2)
                (argl) edge[bend left=\bendangle] (full)
                (argl) edge[bend right=\bendangle] (full)
                (full) edge[in=0,out=-45] node[above, sloped] {$\up\bar{\start}^{arg}$} (narg1)
                (narg1) edge[bend left=\bendangle] (narg2)
                (narg1) edge[bend right=\bendangle] (narg2)
                (nargl) edge[bend left=\bendangle] (dead)
                (nargl) edge[bend right=\bendangle] (dead)
                ;
\end{tikzpicture}
        \caption{Automaton for processes at level 1.}
        \label{fig:word_auto_TdPN}
    \end{center}
\end{figure}
The second part of $\Aa$ will be for  processes at level 2. Such processes will only synchronize to store a letter and a position,
and wait to communicate back the corresponding letter and position. This LTS is depicted in Figure~\ref{fig:lettr_auto_TdPN}.
\begin{figure}
    \begin{center}
        \begin{tikzpicture}[initial text=,
    every state/.style={inner sep=2pt,minimum size=10pt,ellipse}]
    \newcommand{\dist}{2.5cm}
    \node[state,initial,circle] (init) at (0,0) {$\init$};
    \node[state,circle] (store) at (2.25,0) {$a,i$};
    \node[state,circle] (final) at (4.5,0) {$\dead$};
    \node[align=center]                      at (-1.5,0) {$\forall a\in\Sigma,$\\$0< i\leq \ell$};

    \path[->]   (init) edge[sloped] node[above] {$\up(a,i)$} (store)
                (store) edge[sloped] node[above] {$\up(\bar{a},i)$} (final)
                ;
\end{tikzpicture}
        \caption{Automaton for processes at level 2.}
        \label{fig:lettr_auto_TdPN}
    \end{center}
\end{figure}
Note that the states $\init$ and $\dead$ are common to both parts of this LTS.

We say that a configuration is \emph{stable} if the root is in state $\idle$ and all processes of depth 1 have their state in $\{\init,\full,\dead\}$.
These configurations represent markings of $\Nn$. We say that the stable configuration $\cnf$ simulates marking $\mrk$, noted $\cnf \sim \mrk$,
if for every word $w\in\Sigma^\ell$, there exist exactly $\mrk(w)$ processes at
level $1$ which have exactly one child in state $(w[i],i)$ for each $0< i \leq
\ell$,
and all their other children are in state $\init$. 
An example of a stable configuration simulating $\mrk=\llbracket ab, ab, ba
\rrbracket$ can be seen in Figure~\ref{fig:conf_from_mark}.
We give a small example of the execution of a move of a token from $ab$ to $ba$ in our system in Figure~\ref{fig:TDPN_move_figure}.
\begin{figure}[b]
    \begin{center}
        \begin{tikzpicture}[initial text=,
    every node/.style={draw,inner sep=2pt,minimum size=10pt,ellipse},
    every edge/.style={draw,dashed}]
    \node (root) at (0,0) {$\idle$};
    \node (w) at (-2,-1) {$\full$};
    \node[circle,inner sep = 1pt] (l1) at (-2.5,-2) {$a,1$};
    \node[circle,inner sep = 1pt] (l2) at (-1.5,-2) {$b,2$};
    \path[->] (root) edge (w) (w) edge (l1) (w) edge (l2);
    \node (w) at (0,-1) {$\full$};
    \node[circle,inner sep = 1pt] (l1) at (-0.5,-2) {$a,1$};
    \node[circle,inner sep = 1pt] (l2) at (0.5,-2) {$b,2$};
    \path[->] (root) edge (w) (w) edge (l1) (w) edge (l2);
    \node (w) at (2,-1) {$\full$};
    \node[circle,inner sep = 1pt] (l1) at (1.5,-2) {$b,1$};
    \node[circle,inner sep = 1pt] (l2) at (2.5,-2) {$a,2$};
    \path[->] (root) edge (w) (w) edge (l1) (w) edge (l2);

\end{tikzpicture}
        \caption{Configuration simulating $\mrk=\llbracket ab,ab,ba\rrbracket$. All unspecified
        processes have their state in $\{\init,\dead\}$.}
        \label{fig:conf_from_mark}
    \end{center}
\end{figure}

\begin{figure}[t]
    \begin{center}
        \begin{tikzpicture}[initial text=,
    every node/.style={inner sep=2pt,minimum size=10pt},rotate=-90]
    
    \newcommand\cholvl{-1.5}
    \newcommand\gchoolvl{-3}
    \newcommand\gchotlvl{-4.5}
    \newcommand\chtlvl{1.5}
    \newcommand\gchtolvl{3}
    \newcommand\gchttlvl{4.5}

    \newcommand\argcolor{black}
    \newcommand\poscolor{black}

    \tikzset{
        messagenodestyle/.style={fill=white},
        timelinestyle/.style={},
        statestyle/.style={draw,ellipse},
        treestyle/.style={dashed},
        messageedgestyle/.style={},
    }

    \node[statestyle] (root) at (0,0) {$\idle$};
    
    \node[statestyle] (ch1) at (1,\cholvl) {$\full$};
    \path[->,treestyle] (root) edge (ch1);
        \node[statestyle] (gch11) at (2,\gchoolvl) {$a,{\color{\poscolor}1}$};
        \node[statestyle] (gch12) at (2,\gchotlvl) {$b,{\color{\poscolor}2}$};
        \path[->,treestyle] (ch1) edge (gch11) (ch1) edge[bend right] (gch12);

    \node[statestyle] (ch2) at (1,\chtlvl) {$\init$};
    \path[->,treestyle] (root) edge (ch2);
        \node[statestyle] (gch21) at (2,\gchtolvl) {$\init$};
        \node[statestyle] (gch22) at (2,\gchttlvl) {$\init$};
        \path[->,treestyle] (ch2) edge (gch21) (ch2) edge[bend left] (gch22);

    \begin{scope}[xshift=-40pt]
        \node[statestyle] (rootend) at (8.25,0) {$\idle$};
        \path[timelinestyle] (root) edge (rootend);
        \path[timelinestyle,very thick] (5,0) edge (7.75,0);
        
        \node[statestyle] (ch1s1) at (3.75,\cholvl) {$\bar{{\color{\argcolor}1},{\color{\poscolor}1}}$};
        \node[statestyle] (ch1s2) at (5.75,\cholvl) {$\bar{{\color{\argcolor}1},{\color{\poscolor}2}}$};
        \node[statestyle] (ch1end) at (7.75,\cholvl) {$\dead$};
        \path[timelinestyle] (ch1) edge (ch1s1) (ch1s1) edge (ch1s2) (ch1s2) edge (ch1end);

        \node[statestyle] (ch2s1) at (4,\chtlvl) {${\color{\argcolor}2},{\color{\poscolor}1}$};
        \node[statestyle](ch2s2) at (6.25,\chtlvl) {${\color{\argcolor}2},{\color{\poscolor}2}$};
        \node[statestyle] (ch2end) at (8.25,\chtlvl) {$\full$};
        \path[timelinestyle] (ch2) edge (ch2s1) (ch2s1) edge (ch2s2) (ch2s2) edge (ch2end);

        \node[statestyle] (gch11end) at (5.5,\gchoolvl) {$\dead$};
        \path (gch11) edge (gch11end);
        \node[statestyle] (gch12end) at (7.25,\gchotlvl) {$\dead$};
        \path[timelinestyle] (gch12) edge (gch12end);

        \node[statestyle] (gch21end) at (5.5,\gchtolvl) {$b,{\color{\poscolor}1}$};
        \path[timelinestyle] (gch21) edge (gch21end);

        \node[statestyle] (gch22end) at (7.25,\gchttlvl) {$a,{\color{\poscolor}2}$};
        \path[timelinestyle] (gch22) edge (gch22end);

        \path[->,messageedgestyle]   (3,0) edge node[messagenodestyle] {$\bar{\start^{\color{\argcolor}1}}$} (3,\cholvl)
                    (3.25,0) edge node[messagenodestyle] {$\start^{\color{\argcolor}2}$} (3.25,\chtlvl)
                    (4.5,\cholvl) edge node[messagenodestyle] {$\bar{a},{\color{\poscolor}1}$} (4.5,\gchoolvl)
                    (5,0) edge node[messagenodestyle] {${\color{\argcolor}1},\bar{a}$} (5,\cholvl)
                    (4.75,\chtlvl) edge node[messagenodestyle] {$b,{\color{\poscolor}1}$} (4.75,\gchtolvl)
                    (5.5,0) edge node[messagenodestyle] {${\color{\argcolor}2},b$} (5.5,\chtlvl)
                    (6.75,\cholvl) edge node[messagenodestyle] {$\bar{b},{\color{\poscolor}2}$} (6.75,\gchotlvl)
                    (7.25,0) edge node[messagenodestyle] {${\color{\argcolor}1},\bar{b}$} (7.25,\cholvl)
                    (6.75,\chtlvl) edge node[messagenodestyle] {$a,{\color{\poscolor}2}$} (6.75,\gchttlvl)
                    (7.75,0) edge node[messagenodestyle]  {${\color{\argcolor}2},a$} (7.75,\chtlvl)
        ; 
    \end{scope}

\end{tikzpicture}
        \caption{Execution of a move transition from place $ab$ to place $ba$.
        The simulation of $\Tmove$ by root is depicted in bold.
        Dotted edges are tree edges, full edges represent transitions.}
        \label{fig:TDPN_move_figure}
    \end{center}
\end{figure}

We can show:

\begin{lemma}\label{lem:simulation}
    A marking $\mrk$ is reachable in $\Nn$ iff there exists a stable configuration $\cnf$ reachable in $\Aa(\Nn,\wfinal)$ such that
    $\cnf \sim \mrk$.
\end{lemma}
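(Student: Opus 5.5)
We prove the two directions separately, each by induction on the length of runs, with the relation $\cnf \sim \mrk$ between stable configurations and markings as the invariant.

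\emph{Soundness of the simulation (from $\Nn$ to $\Aa(\Nn,\wfinal)$).} We argue by induction on the length of a firing sequence $\mrk_0\xrightarrow{t_1}\cdots\xrightarrow{t_n}\mrk$, choosing the tree large enough in advance. The root has unboundedly many level-1 children in state $\init$, each with $\ell$ children in state $\init$.
\begin{itemize}
\item \emph{Base case.} The root leaves $\init_r$, synchronizes $\dw\start^1$ with one fresh child, and then feeds it the letters of $\winit$ position by position, each letter acknowledged through the messages $(1,a)$. It then reaches $\idle$, and the resulting stable configuration simulates $\mrk_0$.
\item \emph{Inductive step.} For a transition $t$ of type move, join or fork, the root first picks, for each consumed place $w$, a child in state $\full$ storing $w$; one exists since $\mrk(w)>0$. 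It also picks fresh children in state $\init$ for each produced place. It then runs the accepting transducer run letter by letter. At step $i$ it receives $(\text{arg},\bar a)$ from each consumer and sends $(\text{arg},a)$ to each producer. Each consumer obtains its letter from its level-2 child in state $(a,i)$; each producer has guessed $a$ and stores it in an $\init$ grandchild.
\item \emph{Result of a step.} Consumers end in $\dead$, with their stored grandchildren in $\dead$. Producers end in $\full$, storing the produced word. The new configuration is stable and simulates $\mrk'$. Dead processes never synchronize again, so they do not disturb $\sim$.
\end{itemize}

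\emph{Completeness (from $\Aa(\Nn,\wfinal)$ to $\Nn$).} We analyze an arbitrary run between two consecutive visits of the root to $\idle$, together with the initial segment up to the first visit. We may also reorder actions of processes not involving the root; standard commutation of independent actions makes this harmless.

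The key invariant concerns every level-1 process in $\full$. It has exactly one grandchild in state $(w[i],i)$ for each $i$, and the remaining children are in $\init$ or $\dead$. This holds because the process reaches $\full$ only by storing positions $1,\dots,\ell$ in order, each into a fresh child. Level-2 processes can only go $\init\to(a,i)\to\dead$, and a level-1 process stores exactly once.

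In a segment from $\idle$ back to $\idle$, the root executes exactly one transducer gadget; we call this step 1. The letters it receives via $(\text{arg},\bar a)$ are forced to be the stored letters. This uses that a consumer at phase $(\bar{\text{arg}},i)$ can only synchronize with a child in state $(a,i)$, which by the invariant is unique and holds $w[i]$. The letters sent to producers are exactly what they store. Hence the gadget run spells an accepting transducer run on $(w,w')$ or $(w,w',w'')$, i.e. a Petri net transition enabled in the current marking. Enabledness holds because each consumed word was held by a distinct $\full$ child. The resulting marking is simulated by the new stable configuration.

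Step 1 is where I expect the main obstacle: level-1 processes may interleave their actions with other children, and may be caught in non-final intermediate states. This is why stability demands that level-1 states be in $\{\init,\full,\dead\}$. We must argue that when the root returns to $\idle$, every child addressed by the root has completed its $\ell$ exchanges. This follows because the root only returns to $\idle$ after $\ell$ synchronizations with each participating child. Non-participating children never leave $\init$ or $\full$, since their awakening requires a root message. Finally, $\sim$ counts only the $\full$ children, so $\dead$ children and untouched $\init$ subtrees are irrelevant, which closes the induction.
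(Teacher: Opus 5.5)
Your proof is correct and is essentially the argument the paper leaves implicit: the lemma is stated right after the construction with no written proof, and you supply the invariant (stable configurations), one root gadget per transducer move, and letters forced by the unique stored child at each position. Three points should be stated explicitly. First, by stability at the start of a gadget, the root's $(\text{arg},\cdot)$ messages can only be answered by the child awakened with that $\text{arg}$ in the same gadget. Second, your claim that the root returns to $\idle$ only after exactly $\ell$ rounds relies on $\Aa_\root$ tracking the current position. Third, a join whose two input places coincide needs two \emph{distinct} $\full$ children, i.e.\ $\mrk(w)\geq 2$ under the multiset firing rule of the original TdPN model, not merely $\mrk(w)>0$; read literally, the paper's relation-based flow $F$ would let such a join fire with a single token, and the lemma would then fail.
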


From Lemma~\ref{lem:simulation} we know that $\Aa(\Nn,\wfinal)$ can reach a
stable configuration with some process at level $1$ storing $\wfinal$ iff there exists
a reachable marking
with a token in $\wfinal$ in $\Nn$.
As $\root$ needs to consume $\wfinal$ as its final operation, we know that it can reach $\final_r$ iff $\wfinal$ is coverable in $\Nn$.

Moreover, the system $\Aa(\Nn,\wfinal)$ is phase-bounded. Indeed, $\root$ always does only one phase
and so do processes at level $2$. By construction,
processes
at level $1$ need $1+2\cdot\ell$ phases before reaching $\full$, and then  $1+2\cdot\ell$ other phases to go to state $\dead$. So they need at most $2+4\cdot\ell$ phases.

We can construct $\Aa(\Nn,\wfinal)$ in polynomial time, as for $\Aa_\root$ we
only need to copy and unwrap the transitions of each transducers into letter
transitions, and then
add the production of $\winit$ and consumption of $\wfinal$. For $\Aa$
the number of states and transitions is polynomial in the size of $\Sigma$ and $\ell$.
We can conclude from~\cite{BaumannMTZ20}:
\begin{theorem}\label{thm:2exp-hard}
    \textsc{$2$-Coverability($k$)} for "PCA" is 2\EXPSPACE-hard.
\end{theorem}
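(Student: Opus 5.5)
The plan is to give a polynomial-time reduction from \textsc{TdPN-Coverability}, which is 2\EXPSPACE-hard by~\cite{BaumannMTZ20}, to \textsc{$2$-Coverability($k$)}. Given $(\Nn,\wfinal)$, I will output the "PCA" $\Aa(\Nn,\wfinal)$ built above, with the separate root LTS merged in via Remark~\ref{rem:mutli_init_st}, together with the target state $\final_r$ and the phase bound $k = 2+4\ell$, written in unary. Since $\ell$ is given in unary in $|\Nn|$, $k$ is polynomial. Every process of the construction lives at depth at most $2$, so the instance is a depth-$2$ instance. Correctness then reduces to showing that $\final_r$ is "$(2,k)$-coverable" in $\Aa(\Nn,\wfinal)$ if and only if $\wfinal$ is coverable in $\text{PN}(\Nn)$.

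For correctness I would rely on Lemma~\ref{lem:simulation}.

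\emph{($\Leftarrow$) direction.} Suppose some reachable marking $\mrk$ has $\mrk(\wfinal)\ge 1$. Take a tree of depth $2$ with enough level-1 processes, each having at least $\ell$ children, to host all tokens ever created along the Petri net run. Lemma~\ref{lem:simulation} gives a stable configuration $\cnf\sim\mrk$ reachable in this tree. From $\idle$, the root then consumes $\wfinal$ through the $\dw\bar{\start}^1$ branch, using the level-1 process that stores $\wfinal$, and so reaches $\final_r$. Each process's trace has the phase shape counted before the theorem, so the whole run is $k$-phase-bounded.

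\emph{($\Rightarrow$) direction.} Consider any run reaching $\final_r$. Look at the last configuration in which the root is in $\idle$, just before it enters the $\wfinal$ gadget. Successful completion of that gadget forces a full-state level-1 process whose stored children spell $\wfinal$. The key step is to argue that this configuration is (or can be made) stable, so that Lemma~\ref{lem:simulation} yields a reachable marking simulated by it, which then has a token on $\wfinal$.

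The main obstacle is exactly this $\Rightarrow$ step. Level-1 processes might be left mid-gadget, or children might be partially written, outside the root's control. I would handle it by noting three facts. First, every synchronization of a level-1 process with the root happens inside a single root gadget. Second, each root gadget completes only after all $\ell$ letters have been exchanged with its participants. Third, when the root is in $\idle$, every level-1 process is therefore in $\init$, $\full$ or $\dead$; partially started processes can only arise from an aborted gadget, and the root cannot return to $\idle$ from one. Level-2 stray states that are not reflected at level 1 are harmless, since the definition of $\sim$ only counts full processes with correct children.

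Assuming Lemma~\ref{lem:simulation} covers this stability reasoning, the reduction is polynomial and correct. Hardness of \textsc{$2$-Coverability($k$)} follows, and in fact already for depth $2$.
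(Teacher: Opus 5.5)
Your proposal is correct and follows the paper's proof: the same polynomial-time reduction from \textsc{TdPN-Coverability} via $\Aa(\Nn,\wfinal)$ on depth-$2$ trees, correctness from Lemma~\ref{lem:simulation} together with the fact that the root must consume $\wfinal$ as its last gadget, and the phase bound $k=2+4\ell$, which is polynomial because $\ell$ is given in unary. Your extra argument that the configuration where the root last sits in $\idle$ is stable only makes explicit what the paper leaves inside Lemma~\ref{lem:simulation}; it relies on each root gadget exchanging exactly $\ell$ letters with every participant, so the root's copies of the transducers must count positions up to $\ell$.
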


\subsection{Upper bound for phase-bounded "PCA"}
\label{subsec:upper-bound-ph-bounded}

In this section we show that \Reachdk{\infty}{k} is in \EXPSPACE\ and that \textsc{$(d=\infty)$-Coverability($k$)} is in $2\EXPSPACE$. To do so, we will prove that on phase-bounded runs,
the language of receives of each process is regular.
We will show how to compute automata for these languages on subtrees of depth one
first, and then apply the technique recursively.

\AP We define an ""Interface PCA"" (\emph{IPCA} for short) as a pair $\Aa=(\Aa_\root,\Aa_\ell)$, with the LTS $\Aa_\root$ for $\root$, and $\Aa_\ell$ for
non-root processes.
The only difference between "IPCA" and "PCA" is that the root process can do actions in $\PRec$ without a parent  to synchronize with.
So a run $\rho$ of a "IPCA" $\Aa$  is labelled by  $\trace(\rho)\in(\PRec\cup\Act_{\Tr})^*$ where 
$\projonproc{\trace(\rho)}{p}\in(\Act_{\Tr})^*$ for every $p \neq \root$. Recall
that $\Act_\Tr  = \{\sycom{p}{q}{m}\mid (p,q)\in C,m\in\Msg\} \cup \{\sytau{p} \mid
p\in P\}$ is the set of synchronous communication actions.

The language of an "IPCA" $\Aa$ is the set of words
$w\in \PRec^*$ such that there exists some tree $\Tr$ and  
a run $\rho$ of $\Aa$ over $\Tr$  with $w=\projonproc{\trace(\rho)}{\PRec}$. 
We denote 
by $L^{d \le 1}_k(\Aa)$ the
language of "IPCA" $\Aa$ over trees of depth $1$ and $k$-phase bounded
runs.

We already know that \Reachdk{1}{\infty} is equivalent to "VASS" coverability. We
will now extend this observation by showing that  languages of "IPCA" are also "VASS" languages.
To do so, we will enrich our "VASS" with action transitions.

\AP Let $m> 0$ be a strictly positive integer and $A$ be a finite alphabet.
An $m$-""VASS with actions"" from $A$ is a tuple $\Vv=(Q,q^\init,\VtransVASS,\Sigmat,\VtransPCA)$ where $Q$ is a finite set of states, $q^\init$ is the initial state, 
$\VtransVASS\subseteq Q\times \Int^m \times Q$ is a counter transition relation,
$\Sigmat =A \cup \{\tau\}$ is the alphabet and $\VtransPCA\subseteq Q \times
\Sigmat \times Q$ is an
action transition relation.

The size of an $m$-"VASS with actions" $\Vv$ is
$|\Vv| = |Q| + \sum_{(q,\ivtr,q')\in\VtransVASS} \|\ivtr\|_1 + |\VtransPCA|$.
Configurations are defined as for "VASS".
The step relation of a $m$-"VASS with actions" is
$(q,\ivec)\xrightarrow{a}(q',\ivec')$ if either 
$(q,a,q')\in\VtransVASS$, $a=\ivtr\in\Int^m$, $\ivec' = \ivec + \ivtr
\geq \vzero$, or $(q,a,q')\in\VtransPCA$, $a\in \Sigmat$, $\ivec = \ivec'
$.
In other words, actions (i.e., transitions from $\VtransPCA$) do not affect the counters.

A run of $\Vv$ is a sequence $\sigma = (q_0,\ivec_0)\xrightarrow{a_1} (q_1,\ivec_1)\xrightarrow{a_2} \dots \xrightarrow{a_n}(q_n,\ivec_n)$
where $(q_0,\ivec_0)$ is initial.
The trace of a run $\sigma$ is the sequence of actions labelling it, and is
denoted as $\trace(\sigma)\in {\Sigmat}^*$.
The language of a "VASS with actions" $\Vv$ is defined as $L(\Vv)=\{w \in A^*\mid w = \projonproc{\trace(\sigma)}{A}\text{ for some run }\sigma\}$.

We say that a run $\sigma$ of $\Vv$ is $k$-phase bounded if there exist $\sigma_1,\dots,\sigma_k$ such that $\sigma = \sigma_1 \cdot\sigma_2 \dots\sigma_k$, and
every $\sigma_i$ has all its transitions either in $\VtransVASS$ or in $\VtransPCA$.
We define $L_k(\Vv)$ as the language of $\Vv$ restricted to $k$-phase bounded runs.

We will now show how to construct a "VASS with actions" $\Vv$ from an "IPCA" $\Aa$ over trees of
depth $1$, which will simulate the "interface language" of $\Aa$.
\begin{restatable}{lemma}{lemIPCAtoVASS}\label{lem:IPCA-to-VASS}
    Let $\Aa=(\Aa_\root,\Aa_\ell)$ be an "IPCA" over trees of depth one.
    We can construct an $m$-"VASS with actions" $\Vv=(Q,q^\init,\VtransVASS,\Sigmat,\VtransPCA)$, with $Q = S_\root$, $m=|S_\ell|$, and $\Sigmat = \PRec \cup \{\tau\}$
     such that, for any $k\in\Natinf$,  $L_k(\Vv)= L^{d\leq1}_k(\Aa)$.
\end{restatable}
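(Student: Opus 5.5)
We use counter abstraction, as for Lemma~\ref{lem:exp-comp}, but keep the root's upward actions visible as actions of $\Vv$. The VASS state records the root's state, and counter $i$ counts how many children are currently in state $s_i\in S_\ell$.

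\textbf{Construction of $\Vv$.} Since the tree is not fixed in advance, children that never act can be ignored. Children still in $s^\init$ are unbounded, so a child in $s^\init$ is produced on demand. We therefore do not count $s^\init$ itself; instead each synchronization may draw a fresh child from $s^\init$. The transitions are as follows.
\begin{itemize}
\item For each root transition $q\act{\up m}q'$ or $q\act{\tau}q'$ of $\Aa_\root$, add the action transition $(q,\up m,q')$, resp.\ $(q,\tau,q')$, to $\VtransPCA$.
\item For each pair $q\act{\dw m}q'\in\Aa_\root$ and $s\act{\up m}s'\in\Aa_\ell$, add the counter transition $(q,\mathbf e_{s'}-\mathbf e_{s},q')$ to $\VtransVASS$. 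When $s=s^\init$, the subtracted term is dropped.
\item Children's $\tau$-steps $s\act{\tau}s'$ become counter transitions $(q,\mathbf e_{s'}-\mathbf e_s,q)$ for every $q$, again dropping the subtraction when $s=s^\init$.
\end{itemize}
Children never perform $\dw$ actions at depth one, so their downward transitions are discarded. This gives $Q=S_\root$, $m=|S_\ell|$ (the coordinate for $s^\init$ is simply never used), and $\Sigmat=\PRec\cup\{\tau\}$.

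\textbf{Run correspondence.} We then prove, by induction on run length in both directions, that reachable configurations $(q,\ivec)$ of $\Vv$ correspond to reachable configurations $\cnf$ of $\Aa$ on depth-one trees. The correspondence is that $\cnf(\root)=q$ and, for each $s\neq s^\init$, $\ivec[s]$ equals the number of children in state $s$. Each step of $\Aa$ maps to exactly one step of $\Vv$, and conversely, with the following labels:
\begin{itemize}
\item a root $\up$ or $\tau$ step maps to an action step with the same label;
\item a root-child synchronization or a child $\tau$-step maps to a counter step.
\end{itemize}
For the direction from $\Vv$ to $\Aa$, we take a tree with as many children as counter increments from $s^\init$ occur in the run. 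Each such increment is assigned a distinct fresh child. Any children left unused stay in $s^\init$, which is harmless.

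\textbf{Phases.} It remains to check that $k$-phase-boundedness transfers. For runs of $\Aa$, the phase bound is imposed on every process's projection. Children at depth one only ever use $\up$ and $\tau$, so their projections consist of a single phase, and the constraint is really on the root. The root's projection alternates between blocks in $(\PRec+\tau)^*$ and blocks in $(\PSnd+\tau)^*$. Under the mapping, $\PRec$ actions become $\VtransPCA$ steps and $\PSnd$ synchronizations become $\VtransVASS$ steps.

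This is the main obstacle: $\tau$-steps, which may sit in either kind of block, must not create extra alternations. A root $\tau$ step maps to a $\VtransPCA$ step, which may lie inside a $\PSnd$ block of the root, and child $\tau$-steps are counter steps that may be interleaved inside a $\PRec$ block. Two fixes are available:
\begin{enumerate}
\item Encode root $\tau$ steps as zero-vector counter transitions instead, since they are invisible in $L(\Vv)$ anyway.
\item Commute child $\tau$-steps, which are independent of the root's actions, to the beginning of the next $\VtransVASS$ block, or to the end of the previous one.
\end{enumerate}
We prefer the first option. A root $\tau$ step is then placed in whichever adjacent phase is convenient: it can be counted as a zero counter step in a $\VtransVASS$ phase, or kept as a $\tau$ action in a $\VtransPCA$ phase, and we include both variants in $\Vv$. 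Child $\tau$-steps still have to be moved out of $\PRec$ blocks, so the second option is needed as well.

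With these adjustments, any $k$-phase-bounded run on one side yields a $k$-phase-bounded run on the other with the same projection onto $\PRec$. This gives $L_k(\Vv)=L^{d\le1}_k(\Aa)$ for every $k\in\Natinf$, including $k=\infty$, where no phase constraint applies.
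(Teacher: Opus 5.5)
Your proposal is correct and follows essentially the same route as the paper: a counter abstraction that leaves the children's initial state untracked, root $\up$ moves as action transitions, root $\tau$ moves included both as $\tau$-actions and as zero-vector counter transitions, and commuting children's moves out of the root's $\PRec$ phases so that the phase structure is preserved. The only real difference is that the paper folds children's $\tau$-steps into each synchronization vector via $\tau^*\,\up m\,\tau^*$ paths of $\Aa_\ell$, whereas you add separate counter transitions for them. One small case your commutation does not cover: if the root has no $\PSnd$ phase at all, there is no counter phase to move child $\tau$-steps into, but then no child ever synchronizes and those steps can simply be dropped.
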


Next we show how to construct a finite automaton for the language of phase-bounded runs of a "VASS with actions". Let $m$ and $k$ be two integers.
Let $\Vv$ be an $m$-"VASS with actions", and $\sigma$ be a $k$-phase bounded 
run of $\Vv$. There exist $(q_1,\ivec_1),\dots,(q_k,\ivec_k)$ such that 
$\sigma = (q_0,\ivec_0),\sigma_1,(q_1,\ivec_1),\sigma_2,(q_2,\ivec_2)\dots(q_k,\ivec_k)$,
where every $\sigma_i$ is either only on $\VtransVASS$, or  only on $\VtransPCA$.
The important observation is that every subrun $\sigma_i$ over $\VtransPCA$ can be replaced by any
other subrun $\sigma_i'$ over $\VtransPCA$  starting  and ending in the same states as $\sigma_i$, as these phases have no effect on the counters.

\begin{lemma}\label{lem:pb-IVASS-to-auto}
    Let $m$ and $k$ be two integers. Let $\Vv=(Q,q^\init,\VtransVASS,A,\VtransPCA)$ be an $m$-"VASS with actions". There exists an automaton $\Bb$ with $O(k\times (2\times|Q|)^{k+1})$ states
    such that $L(\Bb)=L_k(\Vv)$, that can be constructed in $2^{O(m)}\cdot\log(k\times|\Vv|)$-space.
\end{lemma}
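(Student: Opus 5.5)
The plan is to separate the finite-state "shape" of a $k$-phase bounded run, which the automaton $\Bb$ guesses and then tracks, from the counter feasibility of that shape, which is decided once per shape with a Rackoff-style coverability check.

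\textbf{Step 1: guess the skeleton of the run.} Decompose a $k$-phase bounded run $\sigma$ as $\sigma=\sigma_1\cdots\sigma_k$ with boundary states $q_0=q^\init,q_1,\dots,q_k$. Each $\sigma_i$ is either a counter phase (only $\VtransVASS$) or an action phase (only $\VtransPCA$). Without loss of generality the types alternate, since consecutive phases of the same type can be merged and empty phases padded. So the sequence of types is determined by the type of $\sigma_1$: two choices. A \emph{skeleton} is such a type choice together with $q_1,\dots,q_k\in Q$. There are at most $2|Q|^{k}\le (2|Q|)^{k+1}$ skeletons.

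By the observation preceding the lemma, an action phase changes no counter. So whether the skeleton is realizable depends only on the counter phases and the boundary states. Consider the VASS $\Vv_\pi$ with states $\{0,\dots,k\}\times Q$ built as follows.
\begin{itemize}
\item In a counter phase $i$, it copies $\VtransVASS$ inside layer $i$ and adds an $\vzero$-transition from $(i,q_i)$ to $(i+1,q_i)$.
\item In an action phase $i$, it has a single $\vzero$-transition from $(i,q_{i-1})$ to $(i+1,q_i)$. This "teleport" step is allowed only if $q_i$ is reachable from $q_{i-1}$ in the finite graph $\VtransPCA$.
\end{itemize}
The skeleton is realizable by some run iff the state $(k,q_k)$ is coverable (with target vector $\vzero$) in $\Vv_\pi$. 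The size of $\Vv_\pi$ is $O(k\cdot|\Vv|)$ with $m$ counters, so by Rackoff's bound (in the refined $2^{O(m)}\log$ form) this is decidable in space $2^{O(m)}\cdot\log(k\,|\Vv|)$.

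\textbf{Step 2: build $\Bb$.} The states of $\Bb$ are triples $(\pi,i,q)$, where $\pi$ is a realizable skeleton, $i\le k$ is the current phase, and $q\in Q$ is the current state inside an action phase. This gives $O(k\cdot(2|Q|)^{k+1})$ states.
\begin{itemize}
\item From a fresh initial state, $\Bb$ moves by $\varepsilon$ to $(\pi,1,q_0)$ for each realizable $\pi$.
\item In a counter phase $i$ of $\pi$, $\Bb$ jumps by $\varepsilon$ from $(\pi,i,q_{i-1})$ to $(\pi,i+1,q_i)$.
\item In an action phase, $\Bb$ simulates $\VtransPCA$ letter by letter. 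It reads $a\in A$, or makes an $\varepsilon$-move for $\tau$, and may leave phase $i$ only from state $q_i$.
\item Final states are all $(\pi,k+1,\cdot)$, or the same states with an early stop allowed, since fewer phases are permitted.
\end{itemize}

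\textbf{Step 3: correctness.} First, $L_k(\Vv)\subseteq L(\Bb)$. Any run yields a realizable skeleton, and its action phases are runs of $\VtransPCA$ between the boundary states. Conversely, $L(\Bb)\subseteq L_k(\Vv)$. Take an accepting run of $\Bb$ and a witness run of $\Vv_\pi$. Replace each teleport step of the witness by the corresponding action-phase path read by $\Bb$. This is legal because action transitions do not touch the counters, and it yields a $k$-phase bounded run of $\Vv$ with the same trace projection.

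\textbf{Step 4: space bound.} The main obstacle is the complexity accounting. The construction is performed by enumerating skeletons and writing out transitions, which needs $O(k\log|Q|)$ bits of bookkeeping. Running the Rackoff test for each skeleton reuses space $2^{O(m)}\log(k|\Vv|)$. The step I expect to need care is reconciling these two costs: the enumeration counter grows with $k$, while the stated bound $2^{O(m)}\cdot\log(k\times|\Vv|)$ does not visibly contain a $k\log|Q|$ term. I would first try treating the writing of $\Bb$ as an output-producing space-bounded transducer, so that only the feasibility test is charged to the working space and the skeleton itself is read off the output position. I would also need to justify that the $2^{O(m)}\log$ version of the coverability bound (rather than Rackoff's original $2^{O(m\log m)}$) applies to $\Vv_\pi$.
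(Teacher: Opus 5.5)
Your proposal is essentially the paper's proof. Your skeletons are the paper's sequences $t=(q_0,d_1,q_1,\dots,d_k,q_k)$ (the paper keeps all $2^k$ type patterns rather than normalizing to alternating ones, with the same state bound), your layered VASS $\Vv_\pi$ is the paper's $\Uu_t$ (which copies the transitions of $\VtransPCA$ as $\vzero$-transitions instead of precomputed teleports), and $\Bb$ is the disjoint union of the layered automata $\Bb_t$ over valid $t$.

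On Step 4, the paper simply applies the $2^{O(m)}\cdot\log(k\,|\Vv|)$ coverability bound of \cite{KunnemannMSSW25}, which holds for any $m$-VASS and so for $\Uu_t$, and never charges the $O(k\log|Q|)$ bits needed to store the current skeleton. Your output-position idea would not remove that term, since the position counter itself needs that many bits. It is, however, a harmless additive overhead where the lemma is used, because $k$ is fixed or given in unary there.
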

\begin{proof}{}
    \AP Let  $t=(q_0,d_1,q_1,d_2\dots,d_k,q_k)$ be a sequence of states and transition types,
    with $q_i\in Q$ and $d_i\in\{V,A\}$ for each $0<i\leq k$. We say that $t$ is
    \emph{valid} if there exists a run $\sigma =
    (q_0,\ivec_0),\sigma_1,(q_1,\ivec_1),\sigma_2,(q_2,\ivec_2),\dots,
    \sigma_k, (q_k,\ivec_k)$
    where for each $0<i\leq k$, subrun $\sigma_i$ is only over $\Delta_{d_i}$. In that case we say that $\sigma$ \emph{complies} with $t$.
    
    For every valid sequence $t$, we will construct a finite automaton $\Bb_t$ which language is the set of words $w\in A^*$
    such that there exists a run $\sigma = (q_0,\ivec_0),\sigma_1,(q_1,\ivec_1),\sigma_2,(q_2,\ivec_2)\dots(q_k,\ivec_k)$ complying with $t$, and with $w=\projonproc{\trace(\sigma)}{A}$.
    
    Let  $t=(q_0,d_1,q_1,d_2\dots,d_k,q_k)$ be a valid sequence.
    We construct $\Bb_t=(Q_{\Bb_t},q_{\Bb_t}^\init,\transPCA_{\Bb_t},q^\final_{\Bb_t})$, 
    with $Q_{\Bb_t}$ the set of states, $q_{\Bb_t}^\init\in Q_{\Bb_t}$ the initial state, 
    $\transPCA_{\Bb_t} \subseteq  Q_{\Bb_t} \times \Sigmat \times  Q_{\Bb_t}$
    the transition relation, and  
    $q^\final_{\Bb_t} \in Q_{\Bb_t}$ the final state, as follows:
    \begin{itemize}
        \item $Q_{\Bb_t} = Q\times k$, with $(q,i)\in Q_{\Bb_t}$ meaning that the automaton is in state $q$ and is simulating the $i$-th phase
        of a run.
        \item $q_{\Bb_t}^\init = (q_0,1)$.
        \item For each $0<i\leq k$, if $d_i = V$, then $(q_{i-1},i)\xrightarrow{\tau}(q_{i},i)\in\transPCA_{\Bb_t}$,
        otherwise if $d_i = A$, then for all $q\xrightarrow{a}q'\in\VtransPCA$, we have $(q,i)\xrightarrow{a}(q',i)\in\transPCA_{\Bb_t}$.
        Moreover, we have $(q_{i-1},i-1)\xrightarrow{\tau}(q_{i-1},i)\in\transPCA_{\Bb_t}$.
        \item $q^\final_{\Bb_t} =(q_k,k)$.
    \end{itemize}
    A run of $\Bb_t$ is a sequence $\rho = q_0 \xrightarrow{a_1} q_1 \dots q_{n-1} \xrightarrow{a_n} q_n$
    where $q_1 = q_{\Bb_t}^\init$, $q_n =q^\final_{\Bb_t} $, and for each $0< i \leq n$, $q_{i-1} \xrightarrow{a_i} q_{i} \in \transPCA_{\Bb_t}$.
    We denote by $\trace(\rho)\in A_\tau$ the trace of a run $\rho$.
    The language of $\Bb_t$ is the set $L(\Bb_t) = \{w \mid w = \projonproc{\trace(\rho)}{A} \text{ for some run }\rho\text{ of }\Bb_t\}$.
    Now we show that $L(\Bb_t)$ is exactly the set of all projections on $A$ of runs of $\Vv$ complying with $t$.
    Let $\sigma = (q_0,\ivec_0),\sigma_1,(q_1,\ivec_1),\sigma_2,(q_2,\ivec_2)\dots(q_k,\ivec_k)$ be a run complying with
    $t$, and let $w = \projonproc{\trace(\sigma)}{A}$. There exist $w_1,\dots,w_k$ such that $w_i = \projonproc{\trace(\sigma_i)}{A}$.
    We have that if $d_i = V$ then $w_i = \varepsilon$. Then when $d_i = A$, one can read $w_i$ between $(q_{i-1},i)$
    and $(q_i,i)$ in $\Bb_t$ by construction, and when $d_i = V$, one can go from state $(q_{i-1},i)$ to state $(q_i,i)$ in
    $\Bb_t$ reading $\tau$ (as when $d_i = V$ then $w_i = \varepsilon$). Moreover, one can change phases through transitions $(q_{i-1},i-1)\xrightarrow{\tau}(q_{i-1},i)$. So $w$ is accepted by $\Bb_t$.

    Now let $w\in L(\Bb_t)$. There exists  some run $\rho$ of $\Bb_t$ such that $\projonproc{\trace(\rho)}{A} = w$. By construction, there exist
    $\rho_1,\dots,\rho_k$ such that $\rho = \rho_1\cdot\rho_2\dots\rho_k$, where for each $0<i\leq k$ if $d_i = V$, then
    $\rho_i = (q_{i-1},i-1) \xrightarrow{\tau} (q_{i-1},i) \xrightarrow{\tau} (q_i,i)$, and if $d_i = A$ then $\rho_i = (q_{i-1},i-1)\xrightarrow{\tau}(q_{i-1},i)\xrightarrow{u_i} (q_i,i)$
    for some sequence of actions $u_i \in {\Sigmat}^*$.
    By construction, $u_i$ can be read in $\Vv$ between any two configurations $(q_{i-1},\ivec)$ and $(q_{i},\ivec)$.
    And as $t$ is valid, one can construct a $k$-bounded run $\sigma = (q_0,\ivec_0),\sigma_1,(q_1,\ivec_1),\sigma_2,(q_2,\ivec_2)\dots(q_k,\ivec_k)$
    where for each $0< i \leq k$, if $d_i = V$, then $\sigma_i$ makes only counter updates, and if $d_i=A$ then $\trace(\sigma_i)=u_i$.
    So $\sigma$ is complying with $t$ and $\projonproc{\trace(\sigma)}{A} = w$. 

    Finally we construct $\Bb$ by taking the disjoint union of every $\Bb_t$, over
    all valid sequences $t$.
    This automaton has at most $(k\times |Q|) \times (2\times |Q|)^{k}$ states.
   
    \medskip

    Finally we can check if a sequence $t = (q_1,d_1,q_2,\dots,d_{k-2},q_k)$ is
    valid through a reduction to a coverability question on an $m$-"VASS" with $|Q|\times k$
    states.
\ifthenelse{\boolean{arXivLongVersion}}{%
    The details are in Appendix~\ref{app:pb-IVASS-to-auto}.
}{}

     As coverability in an $n$-"VASS" $\Uu$ can be checked in $2^{O(n)}\times\log(|\Uu|)$-space (see~\cite{KunnemannMSSW25}), we can check if  $t = (q_1,d_1,q_2,\dots,q_k)$
     is valid in $2^{O(m)}\times\log(k\times|\Vv|)$-space.
\end{proof}
    
\medskip

Using the above reasoning we can check if a state of a "PCA" is
"$({d=\infty},{k})$-coverable", so for arbitrary depth. First we demonstrate how to use the technique
for depth $2$.
Let $k>0$ be a phase bound, $\Aa =(\Stt,\Act,\transPCA,\initLoc)$ a "PCA", and $s\in \Stt$ a state. By Lemma~\ref{lem:IPCA-to-VASS} we can simulate
the language of any process at depth $1$ on "$k$-phase-bounded" runs using an $m$-"VASS with actions" $\Vv=(\Stt,s^\init,\VtransVASS,\Sigmat,\VtransPCA)$,
where $m = |\Stt|$, $\Sigmat = \PRec \cup \{\tau\}$, that can be constructed in polynomial time. Moreover, using Lemma~\ref{lem:pb-IVASS-to-auto}, we can replace
$\Vv$ by a finite automaton $\Bb$ of size $O(k\times(2|S|)^{k+1})$ that can be built in $2^{O(|S|)}\times\log(O(k\times|S|\times|\transPCA|))$-space.
So we can reduce $(2,k)$-\textsc{coverability} to $(1,k)$-\textsc{coverability}.
Using the same technique as for the upper bound of Lemma~\ref{lem:exp-comp}, we can check if $s$ is coverable in $2^{O(k\times(2|S|)^{k+1})}\times\log(O(|S|))$-space.

If we want to apply the same technique for depth $3$, we first replace all processes of depth $2$ by  automata of size $O(k\times(2|S|)^{k+1})$, and then we
transform all processes of depth $1$ into  $m$-"VASS with actions" where $m = O(k\times(2|S|)^{k+1})$, and which statespace is still $\Stt$.
Using Lemma~\ref{lem:pb-IVASS-to-auto} again we can construct a finite automaton of size $O(k\times(2|S|)^{k+1})$ which can be built in
$2^{O(k\times(2|S|)^{k+1})}\times\log(O(k\times|S|\times|\transPCA|))$-space. Then one can check coverability of $s$ in $2^{O(k\times(2|S|)^{k+1})}\times\log(O(|S|))$-space again.

One can see that repeating the process of compression does not require more space after depth $3$, as the automaton produced will always be of size $O(k\times(2|S|)^{k+1})$.
Moreover, as the size of $\Bb$ is bounded, we know that if $s$ is coverable with $k$-bounded runs, it is coverable  with $k$-bounded runs over a tree of depth at most $O(k\times(2|S|)^{k+1})$.

We can  conclude:

\begin{theorem}\label{thm:phase-bound-cov}
    The problem  \textsc{$({d=\infty},{k})$-Coverability} for "PCA" (with $d \in \Nat$) is
    \EXPSPACE-complete. If $k$ is encoded in unary, then the problem \textsc{$(d=\infty)$-Coverability($k$)} is $2$\EXPSPACE-complete.
\end{theorem}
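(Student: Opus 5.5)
The proof splits into lower and upper bounds, and the lower bounds are already available. For fixed $k$, \EXPSPACE-hardness follows from Lemma~\ref{lem:exp-comp}. A state is $(1,1)$-coverable iff it is $(1,\infty)$-coverable iff it is $(\infty,1)$-coverable, and every $k$-phase-bounded run contains the $1$-phase-bounded runs over depth-one trees. So the \textsc{VASS-Coverability} reduction (whose root only performs $\dw$ actions and whose children only $\up$ actions, hence is $1$-phase-bounded) gives hardness for every fixed $k\geq1$ and $d=\infty$. For $k$ given in unary, 2\EXPSPACE-hardness is Theorem~\ref{thm:2exp-hard}. The construction $\Aa(\Nn,\wfinal)$ uses trees of depth $2$ and $k=2+4\ell$ phases, which is polynomial in unary, so it also applies with unbounded depth: any covering run it admits on a deeper tree can be restricted to depth $2$, since deeper processes never leave $\init$ in that automaton.

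For the upper bound I would formalise the bottom-up compression sketched above as an induction on depth. Let $\Bb_0$ be the trivial automaton accepting $\{\varepsilon\}$ (leaves); it models children that never communicate. Given an automaton $\Bb_j$ whose language over-approximates exactly the $\up$-languages of $k$-phase-bounded subtrees of depth $\le j$, build the \textsc{IPCA} whose root is $\Aa$ and whose children are $\Bb_j$. Since a child's subtree interacts with its parent only through its $\up$-projection, the $\up$-language of a depth-$(j+1)$ subtree equals $L^{d\le1}_k$ of this \textsc{IPCA}. One must check that phase counting is preserved: $\tau$ steps of $\Bb_j$ should count within either kind of phase, and the root's phases in $\Aa$ correspond to alternation between counter phases ($\dw$ synchronisations) and action phases ($\up$). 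Then Lemma~\ref{lem:IPCA-to-VASS} followed by Lemma~\ref{lem:pb-IVASS-to-auto} yields $\Bb_{j+1}$. The key invariant is that $\Bb_{j+1}$ has $O(k(2|S|)^{k+1})$ states independently of $j$, because the \textsc{VASS} control states are $S$ and only the dimension depends on $\Bb_j$. The language sequence $L(\Bb_j)$ is monotone in $j$ and the automata range over a finite set of size bounded by this quantity. I would build the $\Bb_j$ canonically (e.g.\ as the disjoint union over valid $t$, the state set depending only on $S,k$ and on which $t$ are valid), so the sequence stabilises after at most $2^{k(2|S|)^{k+1}}$ steps, the number of possible validity sets. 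Coverability of $s$ at unbounded depth then reduces to coverability at that fixed point.

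The complexity is computed as follows. Each validity check is a \textsc{VASS} coverability question of dimension $m=O(k(2|S|)^{k+1})$, decidable in $2^{O(m)}\log(\cdot)$ space. Storing the current validity set takes $2^{O(k\log|S|)}$ space, and the iteration counter fits in the same bound. This gives $2^{2^{O(k\log |S|)}}$ space: exponential for fixed $k$ and doubly exponential for unary $k$. The final check, $(1,k)$-coverability with children $\Bb$, has the same cost.

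The main obstacle is proving that the IPCA/phase correspondence is exact across levels: that replacing a subtree by an automaton of its $\up$-language loses nothing. The delicate point is that the phase bound of a node depends on the interleaving of its own $\up$ and $\dw$ actions, not only on its children. I would first verify this by proving that runs on a tree can be reordered to commute independent synchronisations of disjoint subtrees, and then argue compositionally.
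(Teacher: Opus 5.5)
Your proposal is correct and follows the paper's own route. Hardness comes from Lemma~\ref{lem:exp-comp} and Theorem~\ref{thm:2exp-hard}, and membership comes from iterating Lemma~\ref{lem:IPCA-to-VASS} and Lemma~\ref{lem:pb-IVASS-to-auto} bottom-up, using that the compressed automaton has $O(k(2|S|)^{k+1})$ states regardless of depth. Your canonical-construction and stabilisation argument just makes explicit the termination that the paper asserts through its depth bound.

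One harmless slip: $\{\varepsilon\}$ is the interface language of an absent child, not of a leaf, since a leaf can still perform $\up$ actions. So your $\Bb_j$ actually captures subtrees of depth at most $j-1$. This does not matter, because you iterate to the fixed point anyway.
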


\section{Conclusion}

We investigated two kinds of bounds under which we
  determine the complexity  of safety checking of systems with rendez-vous
  synchronization over tree topologies. When
  bounding the depth we obtained that the complexity is related to the fast
  growing hierarchy. The second  bound on the phases  led to complexity
  \EXPSPACE-complete if the number of phases is fixed, resp.~2\EXPSPACE-complete if the number of phases is part of
 the input.

 An interesting question left open is whether given $d,k$ and a  "PCA"
 $\Aa$, if all executions of $\Aa$ over $d$-bounded trees
 are $k$ phase-bounded. A further natural question is to close the
 complexity gap when the depth is fixed.
 
\bibliography{biblio.bib}

\ifthenelse{\boolean{arXivLongVersion}}{%
\clearpage
\appendix

\section{Formal construction for the reduction of Coverability for "NCS" to Coverability for "PCA"}\label{app:NCS_to_PCA}

For every $r=((q_0,\dots,q_i),(q'_0,\dots,q'_j))\in\Delta_\Nn$, we initialize the transition and choose a branch by synchronizing processes from the root:
\begin{itemize}
  \item If $i = j = 0$, then $q_0\act{\tau}q_0'\in\Delta_\root$, otherwise $q_0\act{\dw(r,1)}(r,0)\in\Delta_\root$.
  \item For every $0<k<i$, $q_k\act{\up(r,k)}\act{\dw(r,k+1)}(r,k)\in\Delta_\Aa$.
  \item If $i\geq j$, then $q_i\act{\up(r,i)}(r,i)\in\Delta_\Aa$.
  \item If $i<j$, then  $q_i\act{\up(r,i)}\act{\dw(r,i+1)}(r,i)\in\Delta_\Aa$;
  for every $i<k< j$, $s^\init\act{\up(r,k)}\act{\dw(r,k+1)}(r,k)\in\Delta_\Aa$.
 Finally, $s^\init\act{\dw(r,j)}(r,j)\in\Delta_\Aa$.
\end{itemize}
Then we resolve the transition from the process at the deepest level to the root:
\begin{itemize}
  \item If $i>j$, then $(r,i)\act{\up(ok)}s^\dead\in\Delta_\Aa$;
  for every $j<k< i$ $(r,k)\act{\dw(ok)}\act{\up(ok)}s^\dead\in\Delta_\Aa$.
  Finally, if $j\neq 0$, then $(r,j)\act{\dw(ok)}\act{\up(ok)}q_j'\in\Delta_\Aa$.
  \item If $i \le j$, then  $(r,j)\act{\up(ok)}q_j'\in\Delta_\Aa$.
  \item In both cases, for every $0<k<j$, we have $(r,k)\act{\dw(ok)}\act{\up(ok)}q_k'\in\Delta_\Aa$, and $(r,0)\act{\dw(ok)}q_0'\in\Delta_\root$.
\end{itemize}
When a process reaches state  $s^\dead$, it will not be able to synchronize anymore, cutting its subconfiguration from the rest of the tree.

For the initialization, let $\ncnf^\init =(V,E,\root_\ncnf,h)$ be the initial
configuration of the $d$-"NCS". We first construct a corresponding configuration in $\Aa$ from
the leaves to
the root. The leaves will send the vertex they represent in $V$ to their parent, that will wait to receive every vertex of its children before
sending itself to its parent, and we repeat this up to $\root_\ncnf$:
\begin{itemize}
  \item For every leaf $v\in V$, $s^\init\act{\up(\init,v)}h(v)\in\Delta_\Aa$.
  \item For every other $v\neq \root_\ncnf$, let $v_1,\dots,v_n$ be the children of $v$, we have
  $s^\init\act{\dw(\init,v_1)}\dots\act{\dw(\init,v_n)}\act{\up(\init,v)}h(v)\in\Delta_\Aa$.
  \item For $\root_\ncnf$ let $v_1,\dots,v_n$ be the children of $\root_\ncnf$, we have
  $s^\init\act{\dw(\init,v_1)}\dots\act{\dw(\init,v_n)}h(\root_\ncnf)\in\Delta_\root$.
\end{itemize}
Once a process is chosen to be $\root_\ncnf$, it will only be allowed to do synchronizations with its
children. This way we ensure that for any tree $\Tt$, if any other process than $\root$ is chosen as $\root_\ncnf$, its
subconfiguration will be isolated from the rest of the tree (see Remark~\ref{rem:mutli_init_st}).

To check if the configuration $\ncnf^\final=(V,E,\root_\ncnf,h)$ is covered, we use a similar construction as initialization:
\begin{itemize}
  \item For every leaf $v\in V$, $h(v)\act{\up(\init,v)}s^\dead\in\Delta_\Aa$.
  \item For every other $v\neq \root_\ncnf$, let $v_1,\dots,v_n$ be the children of $v$, we have
  $h(v)\act{\dw(\init,v_1)}\dots\act{\dw(\init,v_n)}\act{\up(\init,v)}s^\dead\in\Delta_\Aa$.
  \item For $\root_\ncnf$ let $v_1,\dots,v_n$ be the children of $\root_\ncnf$, we have
  $h(\root_\ncnf)\act{\dw(\init,v_1)}\dots\act{\dw(\init,v_n)}s^\final\in\Delta_\root$.
\end{itemize}

\section{Proofs of Subsection~\ref{subsec:upper-bound-ph-bounded}}

\lemIPCAtoVASS*
\begin{proof}{}
Let $\Aa=(\Aa_\root, \Aa_\ell)$ be an "IPCA". We will construct an $m$-"VASS", for
$m=|S_\ell|$, by simulating up synchronizations of
$\root$ in $\Aa$ as $\VtransPCA$ transitions. As usual, we use the counters to store how many children of the root
are in each state, simulating down synchronizations from $\root$ as vector operations.
To avoid adding an extra phase at initialisation, we do not track the number of children processes in the initial state.
We define $\Vv=(Q,q^\init,\VtransVASS,\Sigmat,\VtransPCA)$
of dimension $m$ over alphabet $\Sigmat = \PRec \cup \{\tau\}$ as follows. We set $Q=S_\root$.
For $s\in\S_\ell$ we denote $\mathbf{v}_s \in \ZZ^{S_\ell}$ the vector with
$\mathbf{v}_s[s]=1$ and $\mathbf{v}_s[s']=0$  for $s' \not= s$.

The transition relation is defined as follows: 
\begin{itemize}
    \item For all $(s,\up m,s')\in\transPCA_\root$ we let $(s,\up m,s')\in\VtransPCA$.
    \item For all $(s_\root,\dw m,s_\root')\in\transPCA_\root$ and for all pairs
    of states $(s_\ell^\init,s_\ell')$ of $\Aa_\ell$ such that
    $s_\ell^\init\xrightarrow{\tau^*}\xrightarrow{\up
    m}\xrightarrow{\tau^*}s_\ell'$
    is a path in $\Aa_\ell$, we let $(s_\root,\ivtr,s_\root')\in\VtransVASS$,
    with $\ivtr = \mathbf{v}_{s_\ell'}$.
    \item For all $(s_\root,\dw m,s_\root')\in\transPCA_\root$ and for all pair of states $(s_\ell^\init,s_\ell')$ of $\Aa_\ell$ such that $s_\ell \neq s_\ell^\init$ and $s_\ell\xrightarrow{\tau^*}\xrightarrow{\up m}\xrightarrow{\tau^*}s_\ell'$
    is a path in $\Aa_\ell$, we let $(s_\root,\ivtr,s_\root')\in\VtransVASS$,
    with $\ivtr = \mathbf{v}_{s_\ell'} - \mathbf{v}_{s_\ell}$.
    \item For all $(s,\tau,s')\in\transPCA_\root$, we let $(s,\tau,s')\in\VtransPCA$ and $(s,\vzero,s')\in\VtransVASS$.
\end{itemize}

\medskip

Let $\Tr$ be a tree of depth one and $\cnf$ a configuration of $\Aa$ over $\Tr$.
We say that $(q,\ivec)$ simulates $\cnf$, denoted as $(q,\ivec)\sim\cnf$, if $\cnf(\root) = q$ and for every state $s\in\Ss_\ell\backslash\{s_\ell^\init\}$,
$|\cnf^{-1}(s)| = \ivec(s)$. 

We will now show that $L_k(\Vv)= L^{d\leq1}_k(\Aa)$.
First, let $w\in L^{d\leq1}_k(\Aa)$. Let $\Tr$ be a tree of depth $1$ such that there exists a $k$-bounded run $\rho = \rho_1\cdot\rho_2\dots\rho_k$ of
$\Aa$ over $\Tr$ with $w=\projonproc{\trace(\rho)}{\PRec}$. We suppose wlog.~that for every process $p\neq\root$ and for every $i \leq k$,
if $\projonproc{\trace(\rho_i)}{\PRec}\neq\varepsilon$, then $p$ does not act in $\rho_i$. Moreover,
if $p$ does not synchronize with $\root$, then it stays in state $s_\ell^\init$. If
this is not the case,
$\rho$ can be modified and still produce the same word $w$ without changing the number of phases.
We will construct a $k$-bounded run $\sigma$ of $\Vv$ such that $\projonproc{\trace(\sigma)}{\PRec}=w$.
Initially, every child process of the root is in state $s_\ell^\init$, so
$(s_\root^\init,\vzero) \sim \cnfi{\Aa}{\Tr}$.
We will show that for each $\rho_i$, we can construct a subrun $\sigma_i$ of $\Vv$ such that
if $\rho_i$ starts in configuration $\cnf$ and ends in configuration $\cnf'$, then $\sigma_i$
starts in configuration $(q,\ivec)\sim\cnf$ and ends in configuration $(q',\ivec')\sim\cnf'$.

First, suppose that $\projonproc{\trace(\rho_i)}{\PRec}\neq \varepsilon$, and $\rho_i$ starts in $\cnf$ and
ends in $\cnf'$. Let $(s,\ivec)$ be a configuration of $\Vv$ that simulates $\cnf$.
We know that for every action $a=\up m$ of $\rho_i$, the only moving process is $\root$.
Let $s$ be the state of $\root$ before $a$, and $s'$ the state of $\root$ after $a$.
We know there exists a transition $(s,\up m,s')\in\VtransPCA$, so we can replace every action of $\rho_i$ by an action
of $\Vv$, and construct a subrun $\sigma_i$ starting in $(s,\ivec)$ and ending in $(s',\ivec)\sim c'$. 

Now suppose that $\projonproc{\trace(\rho_i)}{\PRec} = \varepsilon$, and
$\rho_i$ starts in $\cnf$ and ends in $\cnf'$.
For each $p\neq\root$, we can suppose wlog.~that for some $n\in\Nat$ we have
$\rho_i = \rho_{1}'\cdot\rho_{2}'\dots\rho_{n}'$, where for all $j\leq n$ there
is some $p\neq\root$ with
$\projonproc{\trace(\rho_j')}{p} \in \tau^*\cdot\up m\cdot\tau^*$ for some
$m\in\Msg$, and for every $p' \neq p, \root$, 
$\projonproc{\trace(\rho_j')}{p'} = \varepsilon$.
If this is not the case, $\rho_i$ can be reordered and still reach $\cnf'$.
Now, let $\cnf_j'$ and $\cnf_{j+1}'$ be the configurations before and after $\rho_j'$.
We know that in $\rho_j'$ only $\root$ and one process $p$ moves. Moreover, $p$ will only
send one message in this subrun. So we can replace $\rho_j'$ by the transition $(s_\root,\ivtr,s'_\root)\in\VtransVASS$ with
$\cnf_j'(\root) = s$, $\cnf_{j+1}' = s'$, $\cnf_j'(p) = s_\ell$, $\cnf_{j+1}'(p)=s'_\ell$, and $\ivtr = \mathbf{v}_{s_\ell'} - \mathbf{v}_{s_\ell}$.

We can then construct a run $\sigma_i$ by replacing each $\rho_j'$ by one action, and if $\sigma_i$
starts in a configuration simulating $\cnf$, it will end in a configuration simulating $\cnf'$.

Finally, each $\sigma_i$ we constructed is exclusively over $\VtransPCA$ or over $\VtransVASS$, so the run
$\sigma = \sigma_1 \cdot \sigma_2 \dots \sigma_k$ is $k$-phase bounded.
\end{proof}

\subsection{Checking validity of a sequence in proof of Lemma~\ref{lem:pb-IVASS-to-auto}}\label{app:pb-IVASS-to-auto}

Let $t = (q_1,d_1,q_2,\dots,d_{k-2},q_k)$ be a sequence of states and transition
types. As explained below, 
checking if $t$ is valid is reducible to a coverability question on an $m$-"VASS" with $|Q|\times k$ states.

Let $\Uu_t=(Q_{\Uu_t},q^\init_{\Uu_t},\transPCA_{\Uu_t})$ be the following $m$-VASS:
\begin{itemize}
  \item $Q_{\Uu_t} = Q\times k$.
  \item $q^\init_{\Uu_t} = (q^\init,1)$
  \item For each $0<i\leq k$, if $d_i = V$, then for all $q\xrightarrow{\ivtr}q'\in\VtransVASS$, we have
  $(q,i)\xrightarrow{\ivtr}(q',i)\in\transPCA_{\Uu_t}$. Otherwise, if $d_i = A$, then for all $q\xrightarrow{a}q'\in\VtransPCA$,
  we have $(q,i)\xrightarrow{\vzero}(q',i)\in\transPCA_{\Uu_t}$. Moreover, we have $(q_{i-1},i-1)\xrightarrow{\vzero}(q_{i-1},i)\in\transPCA_{\Uu_t}$.
\end{itemize}

We show that the sequence $t$ is valid iff the configuration $((q_k,k),\vzero)$ is coverable in $\Uu_t$.
Suppose that $t$ is valid. Then there exists a run $\sigma^\Vv$ of $\Vv$ that complies with $t$. Let $\sigma^\Vv = (q_0,\ivec_0),\sigma^\Vv_1,(q_1,\ivec_1),\sigma^\Vv_2,(q_2,\ivec_2)\dots(q_k,\ivec_k)$
with every $\sigma^\Vv_i$ only on $\Delta_{d_i}$. We construct a run $\sigma_\Uu$ on $\Uu_t$ by doing the same counter updates done in $\sigma^\Vv_i$ when $d_i=V$,
and by replacing every action transitions by $\vzero$ transitions when $d_i=A$. Moreover, at the end of each $\sigma^\Vv_i$, we fire the transition $(q_{i-1},i-1)\xrightarrow{\vzero}(q_{i-1},i)$.
This run will reach a configuration $((q_k,k),\ivec)$, which covers $((q_k,k),\vzero)$.

Now suppose that we have a run $\sigma^\Uu$ that covers $((q_k,k),\vzero)$ in $\Uu_t$. There exist $\sigma^\Uu_1,\dots,\sigma^\Uu_k$ such that
$\sigma^\Uu = (q_0,0,\ivec_0) \xrightarrow{\sigma^\Uu_1} (q_1,0,\ivec_1) \xrightarrow{\vzero} (q_1,1,\ivec_1)
  \xrightarrow{\sigma^\Uu_1} \dots \xrightarrow{\sigma^\Uu_k} (q_k,k-1,\ivec_k) \xrightarrow{\vzero} (q_k,k,\ivec_k)$. By construction, for each $0<i\leq k$,
if $d_i = V$, one can do the same counter updates between states $(q,i)$ and
$(q',i)$ on $\Uu_t$ as between states $q$ and $q'$ of $\Vv$. So if
configuration
  $((q_i,i),\ivec_i)$ is reachable from $((q_{i-1},i),\ivec_{i-1})$ in $\Uu_t$, then configuration $(q_i,\ivec_i)$ is reachable from $(q_{i-1},\ivec_{i-1})$ in $\Vv$ doing the
  same counter updates.
  Now if $d_i = A$, then $\ivec_{i-1} = \ivec_i$, and by construction if $((q_i,i),\ivec_{i-1})$ is reachable from $((q_{i-1},i),\ivec_{i-1})$ in $\Uu_t$, then
$(q_i,\ivec_{i-1})$ is reachable from $(q_{i-1},\ivec_{i-1})$ doing only
action transitions. So we can construct a run of $\Vv$ of the form
  $\sigma^\Vv = (q_0,\ivec_0),\sigma^\Vv_1,(q_1,\ivec_1),\sigma^\Vv_2,(q_2,\ivec_2)\dots(q_k,\ivec_k)$ where every $\sigma_i$ is only on $\Delta_{d_i}$. Then $\sigma^\Vv$ complies
  with $t$.

}{}

\end{document}